\def\RSsubtxt{section~}\newref{sub}{name = \RSsubtxt}}
\def\RSthmtxt{theorem~}\newref{thm}{name = \RSthmtxt}}
\def\RSlemtxt{lemma~}\newref{lem}{name = \RSlemtxt}}
 \definecolor{BLACK}{gray}{0}
 \definecolor{WHITE}{gray}{1}
 \definecolor{RED}{rgb}{1,0,0}
 \definecolor{GREEN}{rgb}{0,1,0}
 \definecolor{BLUE}{rgb}{0,0,1}
 \definecolor{CYAN}{cmyk}{1,0,0,0}
 \definecolor{MAGENTA}{cmyk}{0,1,0,0}
 \definecolor{YELLOW}{cmyk}{0,0,1,0}
\theoremstyle{plain}
\newtheorem{thm}{\protect\theoremname}
  \theoremstyle{definition}
  \newtheorem{defn}[thm]{\protect\definitionname}
  \theoremstyle{plain}
  \newtheorem{lem}[thm]{\protect\lemmaname}
  \theoremstyle{plain}
  \newtheorem{cor}[thm]{\protect\corollaryname}
  \theoremstyle{plain}
  \newtheorem*{thm*}{\protect\theoremname}
  \providecommand{\corollaryname}{Corollary}
  \providecommand{\definitionname}{Definition}
  \providecommand{\lemmaname}{Lemma}
  \providecommand{\theoremname}{Theorem}
\providecommand{\theoremname}{Theorem}
\begin{document}
\global\long\def\ket#1{\left| #1 \right\rangle }

\global\long\def\bra#1{\left\langle #1 \right|}

\global\long\def\braket#1#2{\left\langle #1 | #2 \right\rangle }

\global\long\def\ketbra#1#2{|#1\rangle\!\langle#2|}

\global\long\def\braopket#1#2#3{\bra{#1}#2\ket{#3}}

\global\long\def\Tr{\text{Tr}}

\global\long\def\Pr{\text{Pr} }

\global\long\def\sn#1{{\rm \text{sn}\left(#1\right)}}

\global\long\def\lsn#1{{\rm \text{lsn}\left(#1\right)}}

\global\long\def\mana#1{{\rm \mathscr{M}\left(#1\right)}}

\global\long\def\relent#1{{\rm \text{r}_{\mathcal{M}}\left(#1\right)}}

\global\long\def\mono#1{{\rm \mathcal{M}\left(#1\right)}}

\global\long\def\regmono#1{{\rm \mathcal{M}^{\infty}\left(#1\right)}}

\global\long\def\regrelent#1{{\rm \text{r}_{\mathcal{M}}^{\infty}\left(#1\right)}}

\global\long\def\wignorm#1{{\rm \|}#1\|_{W}}

\global\long\def\strangestate{\ketbra{\mathbb{S}}{\mathbb{S}}}

\global\long\def\norrellstate{\ketbra{\mathbb{N}}{\mathbb{N}}}

\global\long\def\stabset#1{{\rm \text{STAB}\left(\mathcal{H}_{#1}\right)}}

\global\long\def\abs#1{\left|#1\right|}

\title{The Resource Theory of Stabilizer Computation}

\author{Victor Veitch$^{1,2}$}

\author{Seyed Ali Hamed Mousavian$^{3}$}

\author{Daniel Gottesman$^{3}$}

\author{Joseph Emerson$^{1,2}$}

\affiliation{$^{1}$Institute for Quantum Computing, University of Waterloo, Waterloo,
Ontario, Canada, N2L 3G1}

\affiliation{$^{2}$Department of Applied Mathematics, University of Waterloo,
Waterloo, Ontario, Canada, N2L 3G1}

\affiliation{$^{3}$Perimeter Institute for Theoretical Physics,
Waterloo, Ontario, Canada, N2L 2Y5}

\begin{abstract}
Recent results on the non-universality of fault-tolerant gate sets underline the critical role of resource states, such as magic states, to power scalable, universal quantum computation. Here we develop a resource theory, analogous to the theory of entanglement, for resources for stabilizer codes. We introduce two quantitative measures - monotones - for the amount of non-stabilizer resource. As an application we give absolute bounds on the efficiency of magic state distillation. One of these monotones is the sum of the negative entries of the discrete Wigner representation of a quantum state, thereby resolving a long-standing open question of whether the degree of negativity in a quasi-probability representation is an operationally meaningful indicator of quantum behaviour.
\end{abstract}

\maketitle


\section{Introduction}

It is a major open problem in quantum information to determine the
origins of quantum computational speedup. In particular, it is highly
desirable to characterize exactly what resources are required for
quantum computation. Beyond the obvious theoretical significance, a
resolution to this problem is important because actual physical systems
almost never afford us access to arbitrary quantum operations. For
instance, a physical implementation of a many-qubit system may suffer
from low purity, small coherence times or the inability to create
large amount of entanglement. The problem is to determine how best
to perform quantum computation in the face of the operational restrictions
dictated by physical considerations. 

Broadly speaking, operational restrictions divide
any set of operations into two classes: the subset
of operations that are easy to implement and the remainder that are
not. For example, a common paradigm in quantum communication is two
or more spatially separated parties communicating using classical
communication and local quantum operations, considered ``cheap''
resources, supplemented by ``expensive'' resources that require global manipulation of quantum states, such as entanglement or quantum communication.
This division of quantum operations into cheap and expensive parts
motivates the development of a resource theory\cite{Horodecki2012ResourceTheories}.
In the sense just explained, entanglement theory is the resource theory
of quantum \emph{communication}\cite{BennettConcentratingEntanglementRT1,BennettPurifyingEntanglementRT2,HorodeckiEntReview}.
In this paper we develop a resource theory of quantum \emph{computation}.

The major obstacle to physical realizations of quantum computation
is that real world devices suffer random noise when they execute quantum
algorithms. Fault tolerant quantum computation offers a framework to overcome
this problem. Starting from a given error rate for the physical
computation, logical encodings can be applied to create arbitrarily small effective error rates for the logically encoded computation. Transversal unitary gates, i.e. gates that
do not spread errors within each code block, play a critical role
in fault tolerant quantum computation. Recent theoretical work has
shown that a fault tolerant scheme with a set of quantum gates that
is both universal and transversal does not exist\cite{Eastin2009Restrictions}. 

Many --- though not all --- of the known fault tolerant schemes
are built around the stabilizer formalism. Stabilizer codes pick out
a distinguished set of preparations, measurements, and unitary transformations
that have a fault tolerant implementation; these are sometimes called ``stabilizer operations''. In this
case the fault tolerant operations are not only sub-universal but actually efficiently classically
simulable by the Gottesman-Knill theorem\cite{Gottesman1998GKtheorem}.
Thus to achieve universal quantum computation the stabilizer operations
must be supplemented with some other fault-tolerant non-stabilizer resource.

A celebrated scheme for overcoming this limitation is the magic state
model of quantum computation \cite{Shor1996FT,Gottesman1999Teleportation}
where the additional resource is a set of ancilla systems prepared
in some (generally noisy) non-stabilizer quantum state. The idea
is to consume non-stabilizer resource states using only stabilizer operations in order to
implement non-stabilizer unitary gates, thereby promoting stabilizer
computation to universal quantum computation. Typically the ancilla
preparation process will be subject to the physical error rates as,
by necessity, this process is outside the realm of the stabilizer formalism.
Thus we expect the raw resource states to be highly mixed, but such states
are not directly useful for injection. The resolution is to perform ``magic
state distillation'' \cite{Bravyi_magic_state_paper}, wherein stabilizer operations are used to distill
a large number of these highly mixed resource states into a small number
of very pure resource states. In this context the power of universal quantum computation  reduces to a characterization of the usefulness of the resource states.

We will divide the set of quantum states into those that can be prepared
using the stabilizer formalism, the \emph{stabilizer }states, and
those that can not, the \emph{magic }states%
\footnote{This somewhat whimsical name stems from two sources. First, the use
of the magic moniker in the original Bravyi and Kitaev paper to describe
states that are, apparently magically, both distillable and useful
for state injection. Second, the long held desire by one of the present
authors to refer to himself as a mathemagician.%
}. The goal is to characterize the optimal use of stabilizer operations
to transform resource magic states $\rho_{\text{res}}$ into the target
magic states $\sigma_{\text{target}}$ required for implementing non-stabilizer
gates. This is best considered as two distinct problems:
\begin{enumerate}
\item 
Starting from any number of copies of  a particular  resource state $\rho_{\text{res}}$, 
is it possible to produce even a single copy of a target state $\sigma_{\text{target}}$?
\item Assuming this process is possible, how efficiently can it be done?
That is, how many copies of $\rho_{\text{res}}$ are required to produce
$m$ copies $\sigma_{\text{target}}$? 
\end{enumerate}
The known protocols are able to distill some, but not all, resource
magic states $\rho_{\text{res}}$ to target states useful for quantum
computation. Until very recently it wasn't even known whether some
distillation protocol could be found to take any magic state to a
nearly pure magic state. Astonishingly, the answer to the first question
(at least in odd dimensions) is no: it was shown in \cite{veitch2012BoundMagicNJP}
that there is a large class of \emph{bound magic states} that are
not distillable to pure magic states using any protocol. (There has
also been some interesting progress on this problem in the qubit case \cite{Campbell_bound_magic_states,Reichardt_magic_state_dist_2009,Reichardt_magic_state_dist_2005}.)
The second question is the primary focus of this work. We devise  quantitative measures of how magic a quantum
state is, allowing us to upper bound the distillation
efficiency. For example, suppose the target state is five times as
magical as the resource state according to such a measure.  Then we can
immediately infer that at least five resource states will be required
for each copy of the target state.

Finding distillation protocols to minimize the amount of resources
required is an extremely important problem. Currently stabilizer codes
provide the best hope for practical quantum computation, but the physical
resource requirement for known distillation protocols is enormous.
For example, reference \cite{FowlerPrimerOnSurfaceCodes} analyzes
the requirements for using Shor's algorithm to factor a 2000 bit number
using physical qubits with realistic error rates%
\footnote{Physical qubit error rate 0.1\%, ancilla preparation error rate 0.5\%.}.
 A surface code construction is used to achieve fault tolerance,
from which it is found that roughly a billion physical qubits are
required. About $94\%$ of these physical qubits are used in the distillation
of the ancilla states required to perform the non-stabilizer gates.
More efficient distillation protocols are critical for the realization
of quantum computation, and there has been a recent flurry of effort
on this front eg. \cite{EastinToffoliMSD,BravyiMagicStateLowOverhead,FowlerPrimerOnSurfaceCodes,JonesFourierDistProtocols,MeierMSD4Qubit}.
Of particular interest is ref. \cite{CampbellMSDAllPrimeDim} showing
how magic state distillation can be extended from qubits to systems
of arbitrary prime dimension (qudits) and giving evidence that distillation
efficiencies may be significantly improved using odd-prime dimensional qudits. Unfortunately,
although these innovations offer improvement over the original magic
state distillation protocols, the physical requirements remain extravagant.
Moreover, it is unclear whether these protocols are near optimal or
if dramatic improvements might still be made. The current work partially addresses
this problem by developing a theory for the characterization of
resources for stabilizer computation.

To quantify the amount of magic resource in a quantum state we introduce
the notion of a \emph{magic monotone}. This is any function mapping
quantum states to real numbers that is non-increasing under stabilizer
operations. This is just the common sense requirement that the amount
of non-stabilizer resource available cannot be increased using only
stabilizer operations. Magic monotones are valid measures of the magic
of a quantum state in exactly the same way entanglement monotones
are valid measures of the entanglement of a quantum state. The main
contribution of this paper is the identification and study of two magic
monotones: the \emph{relative entropy of magic} and the \emph{mana}.

The relative entropy of magic is the analogue of the relative entropy
of entanglement. Both magic theory and entanglement theory belong
to a broader set of resource theories\cite{Horodecki2012ResourceTheories}.
In the general setting the quantum states are divided into the free
states that can be created using the restricted operations and the
resource states that can not. The study of resource theories has
primarily focused on the question of the reversible asymptotic inconvertibility
of resource states, i.e., transformations among many copies of these
states in the limit that an infinite number of copies are available.
General resource theories quantify the usefulness of a quantum
state via monotones that are non-increasing under the restricted class
of operations. 

In general there can be many valid choices of monotone.
In the context of reversible asymptotic interconversion, one standard choice is
the asymptotic regularization of the relative entropy distance to
the set of free states. The relative entropy distance between two
states is $S\left(\rho\|\sigma\right)=\Tr\left(\rho\log\rho\right)-\Tr\left(\rho\log\sigma\right)$.
In the present context the relative entropy monotone is the \emph{relative
entropy of magic} $\relent{\rho}\equiv\min_{\sigma\in\text{STAB}(\mathcal{H}_{d})}S\left(\rho\|\sigma\right)$,
the minimum relative entropy distance between the resource state and
any stabilizer state. We show that in general this monotone is subadditive
in the sense $\relent{\rho^{\otimes2}}<2\relent{\rho}$; because of
this, in the asymptotic regime this measure should be regularized as
$\regrelent{\rho}=\lim_{n\rightarrow\infty}\relent{\rho^{\otimes n}}/n$.
This monotone is the \emph{regularized relative entropy of magic}.
\Secref{Relative-Entropy} is devoted to proving that this monotone
has the property that if it is possible
to reversibly asymptotically interconvert states $\rho$ and $\sigma$
using stabilizer protocols then the rate at which this can be done
is given by $\regrelent{\rho}/\regrelent{\sigma}$. Along the way
we also use the relative entropy of magic to find some interesting
features of magic theory. In particular, we establish that if we wish to
create many copies of any magic state (including a bound magic state)
starting with pure magic states, the ratio of the number of starting
pure states to the number of final magic states is non-zero, even asymptotically.

The generality of the relative entropy distance is both a strength
and a weakness. It offers powerful insight into the similarities between
magic theory and other resource theories. However, by the same token
it can tell us little about the unique features of magic theory. Moreover,
the practical relevance of this monotone is specific to the context of reversible interconversion
of magic states in the asymptotic regime of infinite resources. In the context of magic
state computation, we are most interested in the one-way distillation
of magic states using finite resources. This leads us to expect that
the relative entropy of magic and its regularization may offer limited
practical insight for the problem of magic state distillation. There
is an even more discouraging problem with this measure: like the
relative entropy of entanglement, it appears to be prohibitively computationally difficult
to compute the relative entropy of magic. Moreover, we do not even have a guaranteed algorithm to find the value
of the regularized relative entropy of magic. 
Thus this monotone is useful for the holistic study
of the resource theory of magic but is of little use for giving concrete
bounds on achievable rates of distillation. 

In \secref{Computable_Measure} we introduce a computable measure
of the magic of a quantum state: the \emph{mana}. This monotone is inspired by the usefulness of the discrete Wigner function\cite{Gross2006Hudsons,GrossPrimeDimWignerFunction,Wootters1987_discrete_wig_function}
in previous work showing the existence of bound magic states\cite{veitch2012BoundMagicNJP}.  We will restrict attention to qudits of odd prime dimension, as in the previous work.
There it was shown that negative Wigner representation is a necessary
condition for a magic state to be useful for distillation protocols.
It is natural to wonder if this purely binary negative vs. positive
condition could be extended to a quantitative measure of magic. We
show that this is possible by proving that the sum of the negative
entries of the Wigner representation of a state is a magic monotone,
$\sn{\rho}$. This monotone is intuitively appealing, but it still
has a non-additive composition law. To recover additivity we define a closely related quantity, the
\emph{mana} $\mana{\rho}=\log\left(2\sn{\rho}+1\right)$, for which it follows that 
$\mana{\rho\otimes\sigma}=\mana{\rho}+\mana{\sigma}$.
Since it is easy to explicitly find the Wigner representation of an
arbitrary quantum state it is also easy to compute the mana and find
explicit bounds on the efficiency of magic state distillation; to
distill $m$ copies of a target state $\sigma$ from $n$ copies of
a resource state $\rho$ at least $n\ge m\frac{\mana{\sigma}}{\mana{\rho}}$
copies are required on average. As an application we compute the
mana efficiencies of the distillation protocols studied in \cite{AnwarQutritMSD,Campbell_bound_magic_states}. Our monotone suggests the possibility of protocols offering dramatic improvements in efficiency.
Additionally, we provide a detailed characterization of the mana for the qutrit state space, which includes identifying two distinct states with maximal mana.

\section{Background and Definitions}

\subsection{Stabilizer Formalism}

The stabilizer formalism is critical for the results of the present
paper. Here we will give a very brief overview of the elements of
the theory we require. For an overview of the stabilizer formalism
in the context of fault tolerance see \cite{Gottesman1997Stabilizer,GottesmanIntroErrorCorr}.
For an overview of the phase space techniques for the stabilizer formalism
see \cite{Gross2006Hudsons,Gross2007Evenly}. Reference \cite{veitch2012BoundMagicNJP}
gives an overview of the particular mathematical elements that will
be important for the present paper.

We begin by defining the generalized Pauli operators for prime dimension
and we will build up the formalism from these. Let $d$ be a prime
number and define the boost and shift operators:
\begin{eqnarray*}
X\ket j & = & \ket{j+1 \bmod d}\\
Z\ket j & = & \mbox{\ensuremath{\omega}}^{j}\ket j,\ \omega=\exp\left(\frac{2\pi\imath}{d}\right).
\end{eqnarray*}
From these we can define the generalized Pauli (Heisenberg-Weyl) operators
in prime dimension:
\[
T_{(a_{1},a_{2})}=\begin{cases}
\imath^{a_{1}a_{2}}Z^{a_{1}}X^{a_{2}} & (a_{1},a_{2})\in\mathbb{Z}_{2}\times\mathbb{Z}_{2}, \ d = 2 \\
\omega^{-\frac{a_{1}a_{2}}{2}}Z^{a_{1}}X^{a_{2}} & (a_{1},a_{2})\in\mathbb{Z}_{d}\times\mathbb{Z}_{d},\ d\ne2
\end{cases},
\]
where $\mathbb{Z}_{d}$ are the integers modulo $d$. Note that a
slightly different definition is required for qubits. For a system
with composite Hilbert space $H_{a}\otimes H_{b}\otimes\dots\otimes H_{u}$,
the Heisenberg-Weyl operators may be written as: 
\begin{align*}
T_{(a_{1},a_{2})\oplus(b_{1},b_{2})\dots\oplus(u_{1},u_{2})} & \equiv T_{(a_{1},a_{2})}\otimes T_{(b_{1},b_{2})}\dots\otimes T_{(u_{1},u_{2})}.
\end{align*}

The Clifford operators $\mathcal{C}_{d}$ are the unitaries that,
up to a phase, take the Heisenberg-Weyl operators to themselves, i.e.,
\[
U\in\mathcal{C}_{d}\iff\forall\boldsymbol{u}\exists\phi,\boldsymbol{u}':\ UT_{\boldsymbol{u}}U^{\dagger}=\exp\left(i\phi\right)T_{\boldsymbol{u}'}.
\]
The set of such operators form a group --- this is the Clifford group
for dimension $d$. The pure stabilizer states for dimension $d$ are
defined as
\[
\left\{ S_{i}\right\} =\left\{ U\ket 0:U\in\mathcal{C}_{d}\right\} ,
\]
and we take the full set of stabilizer states to be the convex hull
of this set:
\[
\text{STAB}\left(\mathcal{H}_{d}\right)=\left\{ \sigma\in L\left(\mathcal{H}_{d}\right):\ \sigma=\sum_{i}p_{i}S_{i}\right\} ,
\]
where $p_{i}$ is some probability distribution.

We define stabilizer operations to be any combination of computational
basis  preparation, computational basis measurement, and Clifford rotations.
In particular, this includes all stabilizer state preparations and
measurements. This set of operations defines the "stabilizer subtheory", which is a convex subtheory of the full set of allowed quantum operations on a finite-dimensional system. The only stabilizer measurement we consider directly  is measurement
in the computational basis. The other measurements in the stabilizer subtheory can be generated, in the usual Heisenberg picture,  by conjugation under Clifford rotations.

\subsection{Wigner Functions}

In \secref{Computable_Measure}, we will need the discrete Wigner function\cite{Gross2006Hudsons,Wootters1987_discrete_wig_function},
which is defined for quantum systems with finite, odd Hilbert space dimension. The discrete Wigner function is a direct analog of the usual
infinite-dimensional Wigner function\cite{Wigner1932On}. The idea of such representation 
is to attempt to map quantum theory (states, transformations, and measurements) onto a classical probability theory over
a phase space, which can be any continuous or discrete set. In any such representation some
quantum states and measurements must be mapped to distributions with
negative entries\cite{Ferrie2008Frame,Ferrie2010Necessity}, i.e., negative
``quasi-probabilities'' are unavoidable. 
The discrete Wigner representation for odd dimensions enjoys the special property that all stabilizer
operations can be represented non-negatively, so the Wigner representation gives a
classical probability model for the full stabilizer subtheory.

The discrete Wigner representation of a state $\rho\in L(\mathbb{C}^{d^{n}})$
is a quasi-probability distribution over $\left(\mathbb{Z}_{d}\times\mathbb{Z}_{d}\right)^{n}$,
which can be thought of as a $d^{n}$ by $d^{n}$ grid (see \figref{qutrit_wigs} in \secref{Computable_Measure}).
The mapping assigning quantum states $\rho$ to Wigner functions $\left\{ W_{\rho}\left(\boldsymbol{u}\right)\right\} $
is given by
\[
W_{\rho}(\boldsymbol{u})=\frac{1}{d^{n}}\Tr(A_{\boldsymbol{u}}\rho),
\]
where $\left\{ A_{\boldsymbol{u}}\right\} $ are the \emph{phase space
point operators}. These are defined in terms of the Heisenberg-Weyl
operators as, 
\begin{eqnarray*}
A_{\boldsymbol{0}} & = & \frac{1}{d^{n}}\sum_{\boldsymbol{u}}T_{\boldsymbol{u}},\ A_{\boldsymbol{u}}=T_{\boldsymbol{u}}A_{\boldsymbol{0}}T_{\boldsymbol{u}}^{\dagger}.
\end{eqnarray*}
 These operators are Hermitian so the discrete Wigner representation
is real-valued. There are $\left(d^{n}\right)^{2}$ such operators
for $d^{n}$-dimensional Hilbert space, corresponding to the $\left(d^{n}\right)^{2}$
points of discrete phase space. 

A quantum measurement with POVM $\{E_{k}\}$ is represented by assigning
conditional (quasi-) probability functions over the phase space to
each measurement outcome, 
\[
W_{E_{k}}(\boldsymbol{u})=\Tr(A_{\boldsymbol{u}}E_{k}).
\]
 In the case where $W_{E_{k}}(\boldsymbol{u})\ge0\;\forall\boldsymbol{u}$,
this can be interpreted classically as an indicator function or ``fuzzy measurement'' associated with the probability of getting
outcome $k$ given that the ``physical state'' of the system is at phase space point $\boldsymbol{u}$,
$W_{E_{k}}(\boldsymbol{u})=\text{Pr}(\text{outcome }k|\text{location }\boldsymbol{u})$. 

We say a state $\rho$ has positive representation if $W_{\rho}(\boldsymbol{u})\ge0\ \forall\boldsymbol{u}\in\mathbb{Z}_{d}^{n}\times\mathbb{Z}_{d}^{n}$
and negative representation otherwise. We will say a measurement with
POVM $M=\{E_{k}\}$ has positive representation if $W_{E_{k}}(\boldsymbol{u})\ge0\ \forall\boldsymbol{u}\in\mathbb{Z}_{p}^{n}\times\mathbb{Z}_{p}^{n},\ \forall E_{k}\in M$
and negative representation otherwise. We are now ready to state a
few salient facts about the discrete Wigner representation\cite{Gross2006Hudsons,Gibbons2004Discrete}:
\begin{enumerate}
\item (Discrete Hudson's theorem) A pure state $\ket S$ has positive representation
if and only if it is a stabilizer state. Since convex combinations
of positively represented states also have positive representation
this means, in particular, for any stabilizer state $S$ it holds
that $\Tr(A_{\boldsymbol{u}}S)\ge0\ \forall\boldsymbol{u}$.
\item Clifford unitaries act as permutations of phase space. This means
that if $U$ is a Clifford then 
\[
W_{U\rho U^{\dagger}}(\boldsymbol{v})=W_{\rho}(\boldsymbol{v}'),
\]
for each point $\boldsymbol{v}$. Only a small subset of the possible
permutations of phase space correspond to Clifford operations (namely,
the symplectic ones\cite{Gross2006Hudsons}).
\item The trace inner product is given as $\Tr(\rho\sigma)=d^{n}\sum_{\boldsymbol{u}}W_{\rho}\left(\boldsymbol{u}\right)W_{\sigma}\left(\boldsymbol{u}\right)$. 
\item The phase space point operators in dimension $d^{n}$ are tensor products
of $n$ copies of the $d$ dimension phase space point operators,
eg. $A_{\left(0,0\right)\oplus\left(0,0\right)}=A_{\left(0,0\right)}\otimes A_{\left(0,0\right)}$. 
\item The phase point operators satisfy $\Tr\left(A_{\boldsymbol{u}}\right)=1$.
This implies $\Tr\left(B\right)=\sum_{\boldsymbol{u}}W_{B}\left(\boldsymbol{u}\right)$
for a Hermitian operator $B$.
\item $\rho = \sum_{\boldsymbol{u}} W_\rho (\boldsymbol{u}) A_{\boldsymbol{u}}$.
\end{enumerate}
This is all we need to know about the discrete Wigner function for
the present work. For a much more detailed overview see \cite{GrossPrimeDimWignerFunction,Gross2006Hudsons}
or for a moderately more detailed overview see \cite{veitch2012BoundMagicNJP}.

\subsection{Magic Monotones\label{sec:Magic-Monotones}}

In this paper we are concerned with the transformation of non-stabilizer
states using stabilizer operations. In the same way that a state is
defined to be entangled if it is not separable we define:
\begin{defn}
A state is magic if it is not a stabilizer state.
\end{defn}
The most general kind of stabilizer operation possible is of the following
type:
\begin{defn}
A stabilizer protocol is any map from $\rho\in\mathcal{S}(\mathcal{H}_{d^{n}})$
to $\sigma\in\mathcal{S}(\mathcal{H}_{d^{m}})$ composed from the
following operations:
\begin{enumerate}
\item Clifford unitaries, $\rho\rightarrow U\rho U^{\dagger}$
\item Composition with stabilizer states, $\rho\rightarrow\rho\otimes S$
where $S$ is a stabilizer state
\item Computational basis measurement on the final qudit,
$\rho\rightarrow\left(\mathbb{I}\otimes\ketbra ii\right)\rho\left(\mathbb{I}\otimes\ketbra ii\right)/\Tr\left(\rho\mathbb{I}\otimes\ketbra ii\right)$
with probability $\Tr\left(\rho\mathbb{I}\otimes\ketbra ii\right)$
\item Partial trace of the final qudit, $\rho\rightarrow\Tr_{n}\left(\rho\right)$
\item The above quantum operations conditioned on the outcomes of measurements or classical randomness
\end{enumerate}
\end{defn}

Stabilizer protocols encompass magic state distillation protocols
as an important special case. For a function to be a valid measure
of magic, i.e., a monotone, it must be non-increasing under stabilizer operations, a
requirement that can be formalized as:
\begin{defn}
For each $d$, let $\mathcal{M}_{d}:\ \mathcal{S}\left(\mathcal{H}_{d}\right)\rightarrow\mathbb{R}$
be a mapping from the set of density operators on $\mathcal{H}_{d}\cong\mathbb{C}^{d}$
to the real numbers. Define $\mono{\rho}\equiv \mathcal{M}_{d}\left(\rho\right)\ \forall\rho\in\mathcal{S}\left(\mathcal{H}_{d}\right)$ (for the appropriate $d$)
so that $\mono{\cdot}$ is defined for all finite-dimensional Hilbert
spaces. If, on average, $\mono{\Lambda\left(\rho\right)}\le\mono{\rho}$
for any stabilizer protocol $\Lambda$ then we say $\mono{\cdot}$
is a magic monotone.
\end{defn}
There are two important points to notice here. The first is that one need only
require operations to not increase magic on average; if
$\Lambda\left(\rho\right)=\sigma_{i}\ \text{with probability}\ p_{i}$
then we only require $\mono{\rho}\ge\sum_{i}p_{i}\mono{\sigma_{i}}$.
In particular this means that post selected measurement can increase
magic in the sense that we allow $\mono{\left(\mathbb{I}\otimes\ketbra ii\right)\rho\left(\mathbb{I}\otimes\ketbra ii\right)/\Tr\left(\rho\mathbb{I}\otimes\ketbra ii\right)}\ge\mono{\rho}$
as long as measurement outcome $i$ is obtained with sufficiently
small probability. This allows us to analyze non-deterministic protocols.
The second point is that we do not require convexity, i.e., it can
happen that $\mono{p\rho+\left(1-p\right)\sigma}\ge p\mono{\rho}+\left(1-p\right)\mono{\sigma}$.
Although convexity is a desirable feature it is possible to have interesting and useful 
monotones that are not convex (for example, the logarithmic entanglement
negativity\cite{PlenioLogarithmicNegativity}).  

Convexity constrains the behavior of the monotone on all mixtures of density matrices.  The definition of a magic monotone only requires that the measure be non-increasing on mixtures which are formed via stabilizer operations, and only non-increasing relative to the starting states.  For instance, we might form a mixture $\rho = p \rho_0 + (1-p) \rho_1$ by beginning with the state $\rho_0 \otimes \rho_1$ and discarding the second state with probability $p$ and the first state with probability $1-p$.  A magic monotone must have the property that
\[
\mathcal{M} (\rho_0 \otimes \rho_1) \geq \mathcal{M} (\rho),
\]
whereas convexity requires that 
\[
p \mathcal{M}(\rho_0) + (1-p) \mathcal{M} (\rho_1) \geq \mathcal{M} (\rho).
\]
Even if $\mathcal{M}$ is additive (i.e., $\mathcal{M} (\rho_0 \otimes \rho_1) = \mathcal{M} (\rho_0) + \mathcal{M} (\rho_1)$), the latter equation is a stronger constraint.

Notice also that because Clifford gates and composition with stabilizer
states are reversible within the stabilizer formalism (by the inverse
gate and the partial trace respectively) any monotone must actually
be invariant under these operations, as opposed to merely non-increasing.

\section{Relative Entropy of Magic\label{sec:Relative-Entropy}}

Generic resource theories can, and usually do, admit more than one
valid choice of monotone. Requiring a function to be non-increasing
under stabilizer operations is the minimal imposition for it to be
a sensible measure of magic. However, there is no guarantee that all
monotones will be equally interesting or useful. This leads us to
wonder if some further natural conditions could be imposed to eliminate
some of these measures and pick out especially interesting monotones.
Resource theories are concerned with the problem of using restricted
operations to convert between different types of resource states,
for example distilling pure magic states from mixed ones or changing
one type of pure magic state to another type of pure magic state.
Most often this conversion is studied in the asymptotic regime (eg.
\cite{Gour_rel_entropy_of_frameness,Horodecki2012ResourceTheories,Horodecki_Entang_Thermo,HorodeckiPurityTheory,GourSpekkensAsymmetryRT})
where an infinite number of resource states are assumed to be available
to conversion protocols and the task is to determine the rate at which
one type of resource can be converted into another. In this regime
it turns out that for many resource theories the monotone zoo can
be cut down in a rather spectacular fashion: there is a monotone that
\emph{uniquely} specifies the rate at which the asymptotic interconversion
of resource states can take place. Because of the importance of asymptotic
interconversion of resource states this measure is often called the
unique measure of the resource\cite{Horodecki2012ResourceTheories}.
For magic theory the analogous quantity is the regularized relative
entropy of magic.  The purpose of this section is to introduce this
quantity; however, for magic theory it is not unique.

The relative entropy distance between quantum states is $S\left(\rho\|\sigma\right)\equiv\Tr\left(\rho\log\rho\right)-\Tr\left(\rho\log\sigma\right)$.
This is a measure of how distinguishable $\rho$ is from $\sigma$.
Qualitatively, we might expect a measure of how distinguishable
$\rho$ is from all stabilizer states to be a good measure of magic. This 
 inspires the definition:
\begin{defn}
Let $\rho\in\mathcal{S}\left(\mathcal{H}_{d}\right)$.  Then the relative
entropy of magic is $\relent{\rho}\equiv\min_{\sigma\in\stabset d}S\left(\rho\|\sigma\right)$. 
\end{defn}
The intuition that this should be a magic measure is immediately validated:
\begin{thm}
The relative entropy of magic is a magic monotone.\end{thm}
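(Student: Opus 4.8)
The plan is to mirror the classical argument of Vedral and Plenio for the relative entropy of entanglement. The proof rests on three standard properties of the quantum relative entropy $S(\rho\|\sigma)$: invariance under simultaneous unitary conjugation, $S(U\rho U^{\dagger}\|U\sigma U^{\dagger})=S(\rho\|\sigma)$; additivity under tensor products, $S(\rho_{1}\otimes\rho_{2}\|\sigma_{1}\otimes\sigma_{2})=S(\rho_{1}\|\sigma_{1})+S(\rho_{2}\|\sigma_{2})$; and monotonicity under any trace-preserving completely positive map (data processing), in particular under partial trace. From data processing one extracts the fact needed for measurements: for a complete set of orthogonal projectors $\{P_{i}\}$, writing $p_{i}=\Tr(P_{i}\rho)$, $q_{i}=\Tr(P_{i}\sigma)$, $\rho_{i}=P_{i}\rho P_{i}/p_{i}$, $\sigma_{i}=P_{i}\sigma P_{i}/q_{i}$, applying data processing to the pinching channel $\mathcal{E}(X)=\sum_{i}P_{i}XP_{i}$ and computing the relative entropy of the resulting block-diagonal operators gives $S(\rho\|\sigma)\ge S(\mathcal{E}(\rho)\|\mathcal{E}(\sigma))=D(p\|q)+\sum_{i}p_{i}S(\rho_{i}\|\sigma_{i})\ge\sum_{i}p_{i}S(\rho_{i}\|\sigma_{i})$, since the classical relative entropy $D(p\|q)\ge0$. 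I would record this chain as a short preliminary lemma; everything else about $S$ I would cite.

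The second ingredient is structural facts about $\stabset{d}$. It is a compact convex polytope (the convex hull of the finitely many pure stabilizer states) containing the maximally mixed state $\mathbb{I}/d=\tfrac1d\sum_{j}\ketbra jj$, so $\relent{\rho}\le S(\rho\|\mathbb{I}/d)=\log d-S(\rho)<\infty$ and, since $\sigma\mapsto S(\rho\|\sigma)$ is convex and lower semicontinuous, the minimum defining $\relent{\rho}$ is attained at some $\sigma^{\star}\in\stabset{d}$; finiteness of $S(\rho\|\sigma^{\star})$ then forces $\mathrm{supp}\,\rho\subseteq\mathrm{supp}\,\sigma^{\star}$. The remaining facts are closure properties of stabilizer states under the generators of a stabilizer protocol: a Clifford conjugate of a stabilizer state is a stabilizer state; a tensor product of stabilizer states is a stabilizer state; the partial trace of a stabilizer state is a stabilizer state; and the nonzero post-measurement operator $P_{i}\sigma P_{i}$ of a stabilizer state under a computational-basis measurement is proportional to a stabilizer state. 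Each is standard in the stabilizer formalism, and since $\stabset{d}$ is the convex hull of the pure stabilizer states and all four maps are affine it suffices to check them on pure stabilizer states.

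Given these, the theorem reduces to verifying each generator. For a Clifford $U$: $\relent{U\rho U^{\dagger}}\le S(U\rho U^{\dagger}\|U\sigma^{\star}U^{\dagger})=S(\rho\|\sigma^{\star})=\relent{\rho}$. For appending a stabilizer state $S$: $\relent{\rho\otimes S}\le S(\rho\otimes S\|\sigma^{\star}\otimes S)=\relent{\rho}+S(S\|S)=\relent{\rho}$. For partial trace: $\relent{\Tr_{n}\rho}\le S(\Tr_{n}\rho\|\Tr_{n}\sigma^{\star})\le S(\rho\|\sigma^{\star})=\relent{\rho}$ by data processing. For a computational-basis measurement with $P_{i}=\mathbb{I}\otimes\ketbra ii$: with $\rho_{i},\sigma_{i},p_{i},q_{i}$ as above, the support condition $\mathrm{supp}\,\rho\subseteq\mathrm{supp}\,\sigma^{\star}$ guarantees $p_{i}>0\Rightarrow q_{i}>0$ so every surviving term is well defined, and the pinching bound gives $\sum_{i}p_{i}\,\relent{\rho_{i}}\le\sum_{i}p_{i}S(\rho_{i}\|\sigma_{i})\le S(\rho\|\sigma^{\star})=\relent{\rho}$, which is precisely the ``non-increasing on average'' condition. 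A general stabilizer protocol interleaves these operations with classical conditioning; this is handled by induction on the number of steps, taking the expectation over measurement outcomes at each stage. The only nontrivial step is the measurement case — deriving the on-average inequality from data processing plus the block-diagonal computation, and confirming that the degenerate case $q_{i}=0$ is ruled out by the support property of the optimal $\sigma^{\star}$; the other four generators are one-line consequences of unitary invariance, additivity, and data processing together with the closure of $\stabset{d}$.
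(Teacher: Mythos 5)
Your proof is correct and follows essentially the same route as the paper's: verify invariance under Clifford conjugation and composition with stabilizer states, monotonicity under partial trace via data processing, and the on-average inequality for computational-basis measurement, using closure of $\stabset{d}$ under each operation. The only difference is one of detail — you derive the measurement inequality $\sum_i p_i S(\rho_i\|\sigma_i)\le S(\rho\|\sigma)$ from the pinching channel and add the (correct) attainment and support-compatibility observations, whereas the paper simply cites Vedral--Plenio for that lemma.
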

\begin{proof}
This is essentially a consequence of the nice properties of the relative
entropy and holds for the same reasons that the relative entropy is
a monotone for other resources theories. See \subref{rel_ent_monotone}
for details.
\end{proof}
 The main importance of the relative entropy of magic is in the asymptotic
regime.  This is because the relative entropy of magic is subadditive
in the sense $\relent{\rho^{\otimes n}}<n\relent{\rho}$. This follows from the fact that  in general there can be some entangled stabilizer state $\sigma_{AB}\in\mathcal{S}\left(\mathcal{H}_{d}\otimes\mathcal{H}_{d}\right)$
such that $S\left(\rho\otimes\rho\|\sigma_{AB}\right)<\min_{\sigma_{A},\sigma_{B}\in\text{STAB}\left(\mathcal{H}_{d}\right)}S\left(\rho\otimes\rho\|\sigma_{A}\otimes\sigma_{B}\right)$.
In particular this means that the amount of magic added from adding an extra copy of $\rho$ depends 
on how many copies of $\rho$ we
already have. In the asymptotic limit an appropriate measure
should give the amount of magic in $\rho$ when an infinite number
of copies of $\rho$ are available. This prompts us to introduce the
asymptotic variant of the relative entropy measure:
\begin{defn}
Let $\rho\in\mathcal{S}\left(\mathcal{H}_{d}\right)$.  Then the regularized
relative entropy of magic is $\regrelent{\rho}\equiv\lim_{n\rightarrow\infty}\frac{1}{n}\relent{\rho^{\otimes n}}$. 
\end{defn}
We do not have an analytic expression for the relative entropy of
magic and thus we do not have an analytic expression for the asymptotic
version. Moreover, because of the infinite limit in the definition
we do not even know how to numerically approximate $\regrelent{\rho}$
in general. This is the same as the situation in entanglement theory
where it remains a famous open problem to find a ``single letter''
expression for the regularized relative entropy of entanglement. Nevertheless,
the (regularized) relative entropy of magic is useful for the holistic
study of magic theory. For instance, we will use it as a tool to show
that an asymptotically non-zero amount of pure magic resource states is always required
to produce bound magic states via stabilizer protocols, even though no pure
magic can be extracted from the states that are produced.

\subsection{Relative entropy of magic}

One of the major difficulties with the study of resource monotones
is that the actual computation of the value of the monotone for a
particular state is often an intractable problem. Although we do not
know a simple analytic expression for the relative entropy of magic,
it can be computed numerically. For systems with low Hilbert space
dimension this is reasonably straightforward. The relative entropy
is a convex function and we want to perform minimization over the
convex set of stabilizer states. This means that a simple numerical
gradient search will succeed in finding $\min_{\sigma\in\stabset d}S\left(\rho\|\sigma\right)$.
Each qudit stabilizer state can be written as a convex combination
of the $N$ pure qudit stabilizer states. A simple strategy for finding
the relative entropy of magic is to do a numerical search over the
$N-1$ values that define the probability distribution over the stabilizer
states. Unfortunately, for a system of $n$ qu$d$its the number of
pure stabilizer states is\cite{Gross2006Hudsons}
\[
N=d^{n}\prod_{i=1}^{n}\left(d^{i}+1\right),
\]
and this grows too quickly for a numerical search to be viable in
general. For example, the original $H$-type magic state distillation
protocol\cite{Bravyi_magic_state_paper} consumes $15$ resource states
$\rho_{\text{input}}$ to produce a more pure magic state $\rho_{\text{output}}$.
In principal we can bound the quantity of the resource required via,
\[
\relent{\rho_{\text{input}}^{\otimes15}}\ge p_{i}\relent{\rho_{\text{output}}},
\]
where $p_{i}$ is the success probability of the protocol, but this
would require a numerical optimization over more than $2^{136}$ parameters
using the approach just outlined, which is not viable.

For arbitrary resource states it is not clear how to avoid the numerical
optimization. However, the states typically used in magic state distillation
protocols have a great deal of additional structure that can be exploited.
In particular, many protocols have a ``twirling'' step where a random
Clifford unitary is applied to the resource state to ensure it has
the form,
\[
\rho_{\epsilon}=\left(1-\epsilon\right)\ketbra MM+\epsilon\frac{\mathbb{I}}{d}.
\]
If the twirling map is $\mathcal{T}:\rho_{\text{resource}}\rightarrow\sum_{i}p_{i}U_{i}\rho_{\text{resource}}U_{i}^{\dagger}$
for some subset $\left\{ U_{i}\right\} $ of the Clifford operators,
then 
\begin{eqnarray*}
\min_{\sigma\in\text{STAB}}S\left(\left[\left(1-\epsilon\right)\ketbra MM+\epsilon\frac{\mathbb{I}}{d} \right] \Big\|\sigma\right) & \ge & \min_{\sigma\in\text{STAB}}S\left(\mathcal{T}\left(\left(1-\epsilon\right)\ketbra MM+\epsilon\frac{\mathbb{I}}{d}\right) \Big\| \mathcal{T}\left(\sigma\right)\right)\\
 & = & \min_{p\le p_{T}}S\left(\left[\left(1-\epsilon\right)\ketbra MM+\epsilon\frac{\mathbb{I}}{d}\right] \Big\| \left[p\ketbra MM+\left(1-p\right)\frac{\mathbb{I}}{d}\right] \right)\\
 & = & S\left( \left[\left(1-\epsilon\right)\ketbra MM+\epsilon\frac{\mathbb{I}}{d}\right] \Big\| \left[p_{T}\ketbra MM+\left(1-p_{T}\right)\frac{\mathbb{I}}{d}\right] \right),
\end{eqnarray*}
where $p_{T}$ is the largest value such that $p_{T}\ketbra MM+\left(1-p_{T}\right)\frac{\mathbb{I}}{d}$
is a stabilizer state. This means that the relative entropy of magic
can be computed exactly for states of this form by finding $p_{T}$.
Unfortunately the twirling is only applied to individual qudits so
this does not by itself resolve the numerical problems. Nevertheless,
it is possible to give naive bounds according the following observation:
\begin{eqnarray*}
\relent{\rho_{\text{output}}} & \le & \relent{\rho_{\text{input}}^{\otimes n}}\\
 & \le & n\relent{\rho_{\text{input}}}
\end{eqnarray*}
where we have used the obvious fact that the relative entropy of magic
is subadditive.

This bound might not seem naive at all. One might suspect that the
relative entropy of magic is genuinely additive so $\relent{\rho_{\text{input}}^{\otimes n}}\stackrel{?}{=}n\relent{\rho_{\text{input}}}$.
This seems like a very desirable feature for a monotone to have: $n$
copies of a resource state should contain $n$ times as much resource
as a single copy. The relative entropy of magic does not have this
feature --- it can be the case that $\relent{\rho^{\otimes2}}\lneq2\relent{\rho}$.
To establish this we consider the qutrit \emph{Strange} state $\strangestate$
defined as the pure qutrit state invariant under the symplectic component
of the Clifford group (see \subref{The-Qutrit-Case}). Twirling by
the symplectic subgroup $\text{Sp}\left(2,3\right)$ of the Clifford group%
\footnote{This is the Clifford group modulo the Heisenberg-Weyl (Pauli) group.%
} has the effect 
\[
\sum_{F}\frac{1}{\abs{\text{Sp}\left(2,3\right)}}U_{F}\rho U_{F}^{\dagger}=\left(1-\epsilon_{\rho}\right)\strangestate+\epsilon_{\rho}\frac{\mathbb{I}_{3}}{3},
\]
so we can use our twirling argument to find $\relent{\strangestate}$
exactly. A numerical search over the two qutrit stabilizer states
turns up a state $\sigma\in\stabset 9$ such that $S\left(\strangestate^{\otimes2}\|\sigma\right)<2\relent{\strangestate}$. 

Note that the relative entropy of \emph{entanglement} is also subadditive. However,
there is a very important difference between the entanglement and
magic relative entropy measures: for \emph{pure} states the relative entropy of entanglement
is an additive measure. This fact is at the heart of the importance
of the relative entropy distance for the theory of entanglement. As
we have just seen this is not true for the relative entropy of magic.

\subsection{The (regularized) relative entropy of magic is faithful\emph{ }}

The relative entropy $S\left(\rho\|\sigma\right)$ is $0$ if and
only if $\rho=\sigma$. It is easy to see that this implies that $\relent{\rho}$
is \emph{faithful} in the sense that $\relent{\rho}>0$ if and only
if $\rho$ is magic. Since $\relent{\cdot}$ is a magic monotone,
if it is possible to create a magic state $\sigma$ from a (pure) resource
state $\ketbra{\psi}{\psi}$ using a stabilizer protocol it must be
the case that $\relent{\ketbra{\psi}{\psi}}\ge\relent{\sigma}$.  Previous work established the existence of a large class of ``bound
magic states''%
\footnote{Called bound universal states in the original paper.%
} that can not be distilled to pure magic states\cite{veitch2012BoundMagicNJP}. Together
these facts imply that to create any magic state, including bound magic states,  a non-zero amount of magic is required.
This means, in general, that the amount of magic that can be distilled
from a resource state is not equal to the amount of magic required
to create it; this is analogous to the well-known
result in entanglement theory that the entanglement of formation is
not equal to the entanglement of distillation.

Because the relative entropy of magic is subadditive it could be that
$\regrelent{\rho}=\lim_{n\rightarrow\infty}\frac{1}{n}\relent{\rho^{\otimes n}}=0$
for some magic state $\rho$. I.e., it is not automatic that the regularized
relative entropy of magic is faithful. For example, in the resource
theory of asymmetry\cite{Gour_rel_entropy_of_frameness}, the regularized
relative entropy measure is $0$ for all states. Happily, for magic
theory the relative entropy is well behaved in the asymptotic regime:
\begin{lem}
\label{lem:faithful}
The regularized relative entropy of magic is faithful in the sense
that $\regrelent{\rho}=0$ if and only if $\rho$ is a stabilizer
state.\end{lem}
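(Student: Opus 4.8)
The plan is to prove the two implications separately. The forward implication is immediate: if $\rho\in\stabset d$ then a tensor product of pure stabilizer states is again a pure stabilizer state (and this extends to convex combinations), so $\rho^{\otimes n}\in\stabset{d^n}$ for every $n$; hence $0\le\relent{\rho^{\otimes n}}\le S(\rho^{\otimes n}\|\rho^{\otimes n})=0$ and $\regrelent{\rho}=0$. No further work is needed there.

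For the converse the goal is to show that if $\rho$ is magic then $\regrelent{\rho}>0$. First I would recast $\min_{\sigma_n\in\stabset{d^n}}S(\rho^{\otimes n}\|\sigma_n)$, and hence its regularization, as the optimal type-II error exponent for the composite hypothesis test distinguishing $\rho^{\otimes n}$ from the whole set $\stabset{d^n}$. This requires checking that the family $\{\stabset{d^n}\}_n$ is admissible for the generalized quantum Stein's lemma (Brand\~{a}o--Plenio): each $\stabset{d^n}$ is convex and closed (being the convex hull of finitely many pure stabilizer states), contains the maximally mixed state $\mathbb{I}/d^n$, and is closed under tensor products ($\stabset{d^n}\otimes\stabset{d^m}\subseteq\stabset{d^{n+m}}$), under partial traces (the reduced state of a stabilizer state is a stabilizer state), and under permutations of the $n$ tensor factors (every permutation unitary is a product of $\mathrm{SWAP}$ operators, which are Clifford). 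With these structural facts, the generalized Stein's lemma identifies $\regrelent{\rho}$ with the Stein exponent and asserts it is strictly positive whenever $\rho\notin\stabset d$, which is the claim.

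It helps to see where the content lies. Since $\rho$ is magic and $\stabset d$ is compact and convex with $\mathbb{I}/d$ in its interior, the separating-hyperplane theorem supplies a Hermitian $H$ with $0\le H\le\mathbb{I}$ and some $\delta>0$ such that $\Tr(H\rho)\ge\Tr(H\sigma)+\delta$ for all $\sigma\in\stabset d$. For the per-copy average $\bar H_n=\frac1n\sum_{i=1}^n H^{(i)}$ one has $\Tr(\bar H_n\rho^{\otimes n})=\Tr(H\rho)$, while $\Tr(\bar H_n\sigma_n)\le\Tr(H\rho)-\delta$ for \emph{every} $\sigma_n\in\stabset{d^n}$ because each single-copy marginal of a stabilizer state is again a stabilizer state. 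Thresholding $\bar H_n$ between these values gives a test whose type-I error against $\rho^{\otimes n}$ is exponentially small (Hoeffding, since the $H^{(i)}$ commute and act on independent copies) and whose type-II error is exponentially small against any \emph{product} stabilizer state. The main obstacle, and the reason the lemma is not a triviality, is controlling the type-II error against \emph{correlated} (entangled) stabilizer states in $\stabset{d^n}$: a naive argument bounds that error only by a constant below $1$, which would give merely $\relent{\rho^{\otimes n}}=\Omega(1)$ rather than the $\Omega(n)$ needed to survive the division by $n$; that genuine care is required is consistent with the subadditivity $\relent{\rho^{\otimes n}}<n\,\relent{\rho}$ noted above.

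Overcoming this obstacle is exactly the work done by the generalized Stein's lemma: one symmetrizes $\sigma_n$ (its permutation average stays in $\stabset{d^n}$ and, by joint convexity of the relative entropy, does not increase $S(\rho^{\otimes n}\|\cdot)$), then applies an exponential de Finetti reduction so the product-state estimate above can be used copy by copy; for a self-contained proof one transcribes that argument into the present setting, otherwise one cites it. (For \emph{pure} magic states there is a shorter route via the exponential decay of the stabilizer overlap $\max_{\sigma_n\in\stabset{d^n}}\langle S_n|\rho^{\otimes n}|S_n\rangle$ together with $S(\rho^{\otimes n}\|\sigma_n)\ge-2\log F(\rho^{\otimes n},\sigma_n)$, but the mixed case --- which includes the bound magic states --- appears to force the general machinery.) Either way one obtains $\regrelent{\rho}\ge c(\delta)>0$, completing the proof.
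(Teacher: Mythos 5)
Your argument is sound and reaches the right conclusion, but it takes a genuinely different route from the paper. You invoke the Brand\~{a}o--Plenio generalized quantum Stein's lemma for the family $\{\stabset{d^n}\}_n$, after verifying the admissibility conditions (convexity, closedness, the full-rank maximally mixed state, and closure under tensor products, partial traces and permutations --- all of which you check correctly, SWAP being Clifford). The paper instead applies a lighter theorem of Piani: if the free states are closed under the restricted measurements and under the partial trace, then the regularized relative entropy distance is lower-bounded by the single-copy \emph{measured} relative entropy $\min_{\sigma\in\stabset d}\max_{\mathcal{M}}S\left(\mathcal{M}(\rho)\|\mathcal{M}(\sigma)\right)$, where $\mathcal{M}$ ranges over stabilizer measurements; since the stabilizer measurements contain an informationally complete POVM, this quantity is strictly positive for every magic $\rho$. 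Both approaches overcome the obstacle you correctly identify --- controlling correlated stabilizer states so that $\relent{\rho^{\otimes n}}$ grows like $\Omega(n)$ rather than $\Omega(1)$ --- but Piani's route does so with far less machinery (no de Finetti reduction) and yields an explicit single-letter lower bound, while yours, when it applies, buys the stronger operational identification of $\regrelent{\rho}$ as a Stein exponent. Two caveats: the original proof of the generalized Stein's lemma was later found to contain a gap (since repaired), so the measured-relative-entropy route is the more conservative dependency; and your parenthetical shortcut for pure states needs care, since the maximal overlap of $\rho^{\otimes n}$ with \emph{entangled} stabilizer states is not obviously controlled by the $n$-th power of the single-copy overlap.
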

\begin{proof}
The proof of this fact is a straightforward application of a theorem
of Piani\cite{piani2009RelEntFaithful} showing that the regularized
relative entropy measure is faithful for all resource theories where
the set of restricted operations includes tomographically complete
measurements and the partial trace. The idea is to define a variant
of the relative entropy distance that quantifies the distinguishability
of states using only stabilizer measurements. This quantity lower
bounds the usual relative entropy of magic so by showing that its
regularization is faithful we get the claimed result. See appendix \ref{sub:faithful-proof}
for details. 
\end{proof}
We will need this result for the proof of Corollary \ref{cor:reg_rel_ent_unique}
showing that the regularized relative entropy gives the optimal rate
of asymptotic interconversion. It also represents a strengthening
of our earlier claim that a non-zero amount of pure state magic is
required to create any magic state. For finite size protocols this
followed from the faithfulness of the relative entropy of magic, as
just explained. The faithfulness of the regularized relative entropy
implies that the creation of magic states by an asymptotic stabilizer
protocol requires resource magic states to be consumed at a non-zero
rate. The analogous problem in entanglement theory was the main motivation
for proving that the regularized relative entropy of entanglement
is faithful\cite{Brandao2010GenQuantumStein,piani2009RelEntFaithful}.

\subsection{Asymptotic interconversion and the regularized relative entropy}

In the scenario of asymptotic state conversion, it is useful to consider an
additional property that a magic measure may possess beyond those
required to make it a magic monotone.  To understand the additional property, it is simplest to first consider
the case of finite state interconversion. Suppose there is some stabilizer
protocol $\Lambda$ that maps $n$ copies of resource state $\rho$
to $m$ copies of a target magic state $\sigma$. Then it must be
the case that $\mono{\rho^{\otimes n}}\ge\mono{\sigma^{\otimes m}}$
for any magic monotone $\mono{\cdot}$. If there is also some other
stabilizer protocol that maps $\sigma^{\otimes m}$ to $\rho^{\otimes n}$ then it must be
the case that $\mono{\rho^{\otimes n}}=\mono{\sigma^{\otimes m}}$,
which conceptually just means that if $\rho$ and $\sigma$ are equivalent
resources then they have the same magic according to any magic measure.
It is rarely possible to exactly interconvert between resource states
with only a finite number of copies available. However, it is often the case
that we can get close to a reversible interconversion; that is, that conversion
from copies of $\sigma$ to approximate copies of $\rho$ and then back to approximate copies of $\sigma$ loses
only an asymptotically negligible number of copies of the state.  Thus, we wish to study measures
that satisfy the requirement that asymptotically
reversibly interconvertible states have the same resource value. 

Typically if we try to convert $\rho^{\otimes n}$ into $m$ copies
of $\sigma$, the stabilizer protocol ($\Lambda_{n}:\mathcal{H}_{d^{n}}\rightarrow\mathcal{H}_{d^{m}}$)
we use will depend on the number $n$ of input states. When converting
$\rho$ to $\sigma$ it is thus necessary to consider a family of
stabilizer protocols $\Lambda_{n}$ taking $\rho^{\otimes n}$ as
input and producing $m(n)$ approximate copies of $\sigma$ with an
error $\|\Lambda_{n}\left(\rho^{\otimes n}\right)-\sigma^{\otimes m\left(n\right)}\|_{1}=\epsilon_{n}$.
In the case that the approximation becomes arbitrarily good in the
asymptotic limit (i.e., $\lim_{n\rightarrow\infty}\|\Lambda_{n}\left(\rho^{\otimes n}\right)-\sigma^{\otimes m\left(n\right)}\|_{1}\rightarrow0$)
we say $\rho$ is \emph{asymptotically convertible} to $\sigma$ at
a rate $R\left(\rho\rightarrow\sigma\right)=\lim_{n\rightarrow\infty}\frac{m(n)}{n}$.
The additional condition we consider is that if $\rho$
is asymptotically convertible to $\sigma$ then
\begin{equation}
\lim_{n\rightarrow\infty}\frac{1}{n}\left[\mono{\rho^{\otimes n}}-\mono{\sigma^{\otimes m\left(n\right)}}\right]\ge0.\label{eq:asympt_condition}
\end{equation}
That is, if asymptotic conversion is possible then on average we must
put in at least as much magic as we get out, up to some $o(n)$ discrepancy.

If it is possible to interconvert between $\sigma$ and $\rho$ at
rates $R\left(\sigma\rightarrow\rho\right)=R\left(\rho\rightarrow\sigma\right)^{-1}$
then we say the two resources are \emph{asymptotically reversibly
interconvertible}. Any magic monotone satisfying the additional condition (\ref{eq:asympt_condition})
gives the rate of asymptotic interconversion according to the following
theorem:
\begin{thm}
\label{thm:asymp_conv_rate}Let $\mono{\cdot}$ be a magic monotone
satisfying the condition given by \eqref{asympt_condition} and define
its asymptotic variant $\regmono{\rho}=\lim_{n\rightarrow\infty}\mono{\rho^{\otimes n}}/n$.
Then if it is possible to asymptotically reversibly interconvert between
magic states $\rho$ and $\sigma$ and $\regmono{\sigma}$ is non-zero
the rate of conversion is given by $R\left(\rho\rightarrow\sigma\right)=\regmono{\rho}/\regmono{\sigma}$. \end{thm}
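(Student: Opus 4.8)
The plan is to establish the rate formula $R(\rho\to\sigma)=\regmono{\rho}/\regmono{\sigma}$ by squeezing it between an upper bound from the monotonicity of $\mono{\cdot}$ together with condition \eqref{asympt_condition}, and a matching lower bound obtained by running the reverse protocol. First I would fix the setup: let $\Lambda_n$ be the family of stabilizer protocols converting $\rho^{\otimes n}$ into $m(n)$ approximate copies of $\sigma$ with $\|\Lambda_n(\rho^{\otimes n})-\sigma^{\otimes m(n)}\|_1=\epsilon_n\to 0$, so that $R(\rho\to\sigma)=\lim_n m(n)/n$. Applying condition \eqref{asympt_condition} directly gives $\lim_n \frac{1}{n}[\mono{\rho^{\otimes n}}-\mono{\sigma^{\otimes m(n)}}]\ge 0$, i.e. $\regmono{\rho}\ge \lim_n \frac{m(n)}{n}\cdot\frac{\mono{\sigma^{\otimes m(n)}}}{m(n)} = R(\rho\to\sigma)\,\regmono{\sigma}$. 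Since $\regmono{\sigma}\neq 0$ this yields $R(\rho\to\sigma)\le \regmono{\rho}/\regmono{\sigma}$. By the same argument applied to the reverse conversion (which exists by the reversibility hypothesis), $R(\sigma\to\rho)\le \regmono{\sigma}/\regmono{\rho}$, and reversibility means $R(\sigma\to\rho)=R(\rho\to\sigma)^{-1}$, so $R(\rho\to\sigma)\ge \regmono{\rho}/\regmono{\sigma}$. Combining the two inequalities gives the claimed equality.

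The one genuine subtlety — and the step I expect to be the main obstacle — is the passage from $\mono{\sigma^{\otimes m(n)}}$, the value on the true tensor power, to the value on the approximate output $\Lambda_n(\rho^{\otimes n})$, which differs by $\epsilon_n$ in trace norm. Condition \eqref{asympt_condition} is stated in terms of the exact target $\sigma^{\otimes m(n)}$, so to make the chain of inequalities rigorous one needs either (i) that $\mono{\cdot}$ is asymptotically continuous, so that $\frac{1}{n}|\mono{\Lambda_n(\rho^{\otimes n})}-\mono{\sigma^{\otimes m(n)}}|\to 0$ whenever the trace-norm error vanishes and the dimension grows only exponentially in $n$, or (ii) that the definition of asymptotic convertibility already builds in that the error-correction overhead is $o(n)$ copies, so one may freely replace $\sigma^{\otimes m(n)}$ by $\sigma^{\otimes m(n)-o(n)}$ produced exactly and absorb the discrepancy into the $\frac{1}{n}[\cdots]$ limit. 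I would note which of these is being assumed; for the relative entropy of magic, asymptotic continuity follows from the corresponding property of the relative entropy (Fannes-type bounds), and for \emph{mana} additivity plus boundedness on the relevant state spaces makes the argument cleaner still.

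Finally I would record the two small consistency checks that the argument needs: that $\regmono{\cdot}$ is well-defined (the limit $\lim_n \mono{\rho^{\otimes n}}/n$ exists — for a subadditive and bounded-below sequence this is Fekete's lemma, which applies to the relative entropy of magic; for an additive monotone like \emph{mana} it is immediate), and that $\mono{\rho^{\otimes n}}\ge \mono{\sigma^{\otimes m(n)}}$ holds already at finite $n$ up to the error term, which is exactly the content of the monotone property composed along $\Lambda_n$. With these in place the proof is the two-sided squeeze above; the detailed estimate handling $\epsilon_n$ is the only place real work is needed, and I would defer its routine verification to an appendix.
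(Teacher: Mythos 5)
Your two-sided squeeze is correct and is, in substance, the standard argument; the difference is that the paper does not actually write this argument out --- its ``proof'' of \thmref{asymp_conv_rate} is a citation to the general resource-theory result of Horodecki--Oppenheim and to Theorem 4 of the Gour--Spekkens asymmetry paper, noting only that those works assume asymptotic continuity, which implies the weaker condition \eqref{asympt_condition} that the paper prefers to take as the hypothesis. So you have supplied the self-contained proof that the paper outsources, and your chain $\regmono{\rho}\ge R(\rho\rightarrow\sigma)\,\regmono{\sigma}$ plus the reversed inequality and $R(\sigma\rightarrow\rho)=R(\rho\rightarrow\sigma)^{-1}$ is exactly what those references do. One point where you create work for yourself unnecessarily: the ``genuine subtlety'' you flag --- passing from $\mono{\Lambda_n(\rho^{\otimes n})}$ to $\mono{\sigma^{\otimes m(n)}}$ across the trace-norm error $\epsilon_n$ --- is not an obstacle here, because condition \eqref{asympt_condition} is \emph{stated} directly as a bound involving the exact tensor power $\sigma^{\otimes m(n)}$ whenever asymptotic conversion is possible; the hypothesis already absorbs the approximation error, and this is precisely why the paper adopts \eqref{asympt_condition} rather than asymptotic continuity (which, as the appendix shows, fails for the mana). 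Your options (i) and (ii) are therefore ways of \emph{verifying} \eqref{asympt_condition} for a given monotone, not missing steps in the proof of the theorem itself. Two further small remarks: your application of the condition in the reverse direction implicitly uses $\regmono{\rho}\neq 0$, which is not among the hypotheses, but if $\regmono{\rho}=0$ the forward bound already forces $R(\rho\rightarrow\sigma)=0$ and the formula holds degenerately; and the step $\lim_n\mono{\sigma^{\otimes m(n)}}/n = R\,\regmono{\sigma}$ needs $m(n)\rightarrow\infty$, which holds whenever $R>0$.
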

\begin{proof}
This is a special case of broader theorem that says this result holds
in any resource theory. The result was first proved in \cite{Horodecki_Entang_Thermo}.
That paper dealt primarily with entanglement and missed the requirement
that the regularization of the monotone needs to be non-zero. This
was pointed out in \cite{Gour_rel_entropy_of_frameness}, and the
theorem we state here is essentially the application of their Theorem
4 to magic theory. The only subtlety is that instead of the condition
in \eqref{asympt_condition} they require the monotone to be \emph{asymptotically
continuous}, which means $\lim_{n\rightarrow\infty}\|\Lambda_{n}\left(\rho^{\otimes n}\right)-\sigma^{\otimes m\left(n\right)}\|_{1}\rightarrow0$
implies
\[
\lim_{n\rightarrow\infty}\frac{\mono{\Lambda_{n}\left(\rho^{\otimes n}\right)}-\mono{\sigma^{\otimes m\left(n\right)}}}{1+n}\rightarrow0.
\]
The first step of their proof is to show that this condition
implies \eqref{asympt_condition} so we prefer to start with the weaker requirement directly.\end{proof}
\begin{cor}
\label{cor:reg_rel_ent_unique} If it is possible
to asymptotically reversibly interconvert between magic states $\rho$
and $\sigma$, the rate at which this can be done is $R\left(\rho\rightarrow\sigma\right)=\regrelent{\sigma}/\regrelent{\rho}$,
where $\regrelent{\sigma}$ is the regularized relative entropy.
\end{cor}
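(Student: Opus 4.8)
The plan is to derive this corollary as a direct instantiation of Theorem~\ref{thm:asymp_conv_rate}, applied with the monotone $\mono{\cdot}=\relent{\cdot}$, the relative entropy of magic. To do so I must verify the two hypotheses of that theorem: first, that $\relent{\cdot}$ is a magic monotone, and second, that it satisfies the asymptotic condition~\eqref{asympt_condition}. The first is immediate from the theorem in \secref{Relative-Entropy} asserting that the relative entropy of magic is a magic monotone. For the second, I would invoke the asymptotic continuity of the relative entropy of magic — the standard Fannes/Donald-type continuity estimates for relative-entropy-based measures show that $|\relent{\tau}-\relent{\tau'}|$ is bounded by a function of $\|\tau-\tau'\|_1$ and the Hilbert space dimension that is $o(n)$ when evaluated on $n$-fold tensor powers — and then note, exactly as remarked in the proof of Theorem~\ref{thm:asymp_conv_rate}, that asymptotic continuity implies~\eqref{asympt_condition}. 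With both hypotheses in hand, Theorem~\ref{thm:asymp_conv_rate} gives that whenever $\rho$ and $\sigma$ are asymptotically reversibly interconvertible and $\regrelent{\sigma}$ is nonzero, the conversion rate is $R\left(\rho\rightarrow\sigma\right)=\regrelent{\rho}/\regrelent{\sigma}$.

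The remaining gap is the hypothesis ``$\regrelent{\sigma}$ is nonzero,'' which Theorem~\ref{thm:asymp_conv_rate} carries as an explicit assumption but which we want to remove from the corollary's statement. Here I would use Lemma~\ref{lem:faithful}: the regularized relative entropy of magic is faithful, so $\regrelent{\sigma}=0$ precisely when $\sigma$ is a stabilizer state. Since the corollary concerns interconversion between \emph{magic} states $\rho$ and $\sigma$, neither is a stabilizer state, so $\regrelent{\sigma}>0$ automatically, and the nonzero-denominator hypothesis is satisfied for free. (Note the statement as written has the ratio as $\regrelent{\sigma}/\regrelent{\rho}$, consistent with the convention $R\left(\sigma\rightarrow\rho\right)=R\left(\rho\rightarrow\sigma\right)^{-1}$ and the direction chosen in the corollary.)

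The main obstacle is establishing the asymptotic continuity (equivalently, condition~\eqref{asympt_condition}) of the relative entropy of magic rigorously. This is not entirely automatic: one needs a continuity bound for $\relent{\cdot}=\min_{\sigma\in\stabset d}S(\cdot\|\sigma)$ that is uniform enough over the relevant family of states and that degrades only sublinearly in $n$ when applied to $\tau=\Lambda_n(\rho^{\otimes n})$ versus $\tau'=\sigma^{\otimes m(n)}$ on $\mathcal{H}_{d^{m(n)}}$. The key points are that $\stabset{d^{m(n)}}$ always contains the maximally mixed state (so $\relent{\tau}\le \log d^{m(n)}$, giving the needed dimensional control), that $m(n)=O(n)$, and that $\|\tau-\tau'\|_1\to 0$; combining these with a Donald-type bound on the relative entropy of magic yields $\frac{1}{n}|\relent{\tau}-\relent{\tau'}|\to 0$, which is exactly what~\eqref{asympt_condition} requires. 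Everything else in the corollary is bookkeeping: plug the verified monotone into Theorem~\ref{thm:asymp_conv_rate} and discharge its nonzero hypothesis via Lemma~\ref{lem:faithful}.
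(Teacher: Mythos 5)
Your proposal is correct and follows essentially the same route as the paper: both arguments instantiate \thmref{asymp_conv_rate} with the relative entropy of magic, establish condition \eqref{asympt_condition} via asymptotic continuity of the relative entropy distance to the convex set of stabilizer states (the paper simply cites \cite{SynakRadtke2006AsymptoticContinuity}, whose content is exactly the Donald-type bound you sketch, using that the stabilizer set contains the maximally mixed state), and discharge the nonzero-denominator hypothesis with \lemref{faithful}. The only wrinkle is the orientation of the ratio: your derivation gives $R\left(\rho\rightarrow\sigma\right)=\regrelent{\rho}/\regrelent{\sigma}$ exactly as in \thmref{asymp_conv_rate}, and the inverted form appearing in the corollary's statement is an inconsistency in the paper rather than something your convexity-convention remark actually resolves.
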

\begin{proof}
In \cite{SynakRadtke2006AsymptoticContinuity} it is shown that the
relative entropy distance to any convex set of quantum states is asymptotically
continuous. Since asymptotic continuity implies \eqref{asympt_condition}
and the stabilizer states are a convex set, the relative entropy of
magic is a magic monotone satisfying condition \eqref{asympt_condition}.
Moreover, we showed in \lemref{faithful} that the regularized relative
entropy is non-zero for all magic states. 
\end{proof}
Notice that the relative entropy is only one example of a monotone
satisfying the conditions of \thmref{asymp_conv_rate}. There could
be other monotones for which this result holds. In fact it is conceivable
that this result holds for every magic monotone. For any magic monotone
with this property, if it possible to asymptotically interconvert between
$\rho$ and $\sigma$, it must be the case that $\regmono{\rho}=C\regrelent{\rho}\implies\regmono{\sigma}=C\regrelent{\sigma}$
so $\regrelent{\sigma}/\regrelent{\rho}=\regmono{\sigma}/\regmono{\rho}$.
I.e., the regularization of such magic measures can differ only up to
a multiplicative factor that can vary between sets of quantum states
where asymptotic interconversion is possible. 

If we have a resource measure $\mono{\cdot}$ that is additive then
it will be equal to its own regularization, $\mono{\cdot}=\regmono{\cdot}$.
If this measure also satisfies \eqref{asympt_condition} then it
will tell us how to compute the asymptotic interconversion rate whenever
asymptotic interconversion is possible. In the particular case that we have
a resource theory where asymptotic interconversion is possible between
any two resource states then it is easy to see that up to a constant
factor there really is a single unique measure of the resource. For instance,
this is true of bipartite pure entangled states, and the entanglement entropy\cite{BennettConcentratingEntanglementRT1}
is an additive measure that satisfies our condition.  Thus the entanglement entropy is the genuinely unique
measure of pure state bipartite entanglement. One of the special features
of the relative entropy of entanglement is that it reduces to the
entanglement entropy on pure states. It is this feature which is ultimately
responsible for the privileged status of the relative entropy of entanglement.
In the case of magic theory the relative entropy of magic does not
reduce to an additive measure on pure states so there is no
apparent reason to prefer the relative entropy of magic over any other
monotone satisfying the conditions of \thmref{asymp_conv_rate}.
This stands in contrast to the claim that the relative entropy distance
to the set of free states is the unique measure of the resource (eg.
\cite{Horodecki2012ResourceTheories}).

\subsection{Discussion}

The privileged status of the relative entropy magic comes from its
role in the asymptotic regime. Since the assumption of infinite state
preparations is unreasonable for an actual physical system one might
expect that the measure would be of limited practical value. This
suspicion is lent additional weight by the fact that, like the regularized
relative entropy distance in other resource theories, it is not known
how to compute $\regrelent{\rho}$ in general. The regularized relative
entropy distance is essentially useless for analyzing the performance
of particular distillation protocols. Nevertheless, the monotone is
a useful tool for the holistic study of the resource theory of magic.
This is the role of the regularized relative entropy distance in the
theory of entanglement, where it is a well studied object. We had
a taste of this already in our demonstration that the amount of pure
state magic required to create a magic state does not equal the amount
of pure state magic that can be distilled from that state. It is an
exciting direction for future work to see what other insights can
be gleaned from the relative entropy of magic and its asymptotic variant. 

It is also important to understand exactly what corollary \corref{reg_rel_ent_unique}
says. The statement is that \emph{if} asymptotic interconversion is
possible then the rate is given by $\regrelent{\rho}/\regrelent{\sigma}$.
This ``if clause'' is a deceptively strong requirement: it is not guaranteed
that asymptotic interconversion will always be possible, or even that
it will ever be possible. In particular, every currently known magic
state distillation protocol has rate $0$ and it is an important open
problem to determine if a positive rate distillation protocol exists.

\section{A Computable Measure of Magic\label{sec:Computable_Measure}}

The results of the previous section deal primarily with reversible
conversion of magic states in the limit where infinite copies are
available, but for magic state distillation we are interested in the
one way distillation of resource magic states to pure target magic
states in the regime where only a finite number of resource states
are available. Because of this, there is no reason to prefer the (regularized)
relative entropy of magic over any other monotone. Nevertheless, the
relative entropy, like any monotone, gives non-trivial bounds on distillation
efficiency. But there is a more fundamental problem: it is generally computationally
hard to compute the relative entropy, and we have no idea how
to compute the regularized relative entropy so we are unable to find
explicit upper bounds to distillation. In this section we address
this issue by introducing a computable measure of magic.

\subsection{Sum negativity and mana}

Previous work establishing the existence of bound magic states\cite{veitch2012BoundMagicNJP}
provides a starting place in the search for a computable monotone.
The fundamental tool in that construction is the discrete Wigner function.
There it was found that a necessary condition for a magic state to
be distillable is that it have negative Wigner representation. However,
that work is purely binary in the sense that it does not distinguish
degrees of negative representation. It is natural to suspect that
a state that is ``nearly'' positively represented is less magic
than a state with a large amount of negativity in its representation.
Here we formalize this intuition by showing that the absolute value
of the sum of the negative entries of the discrete Wigner representation
of a quantum state is a magic monotone. 
\begin{defn}
The sum negativity of a state $\rho$ is the sum of the negative elements
of the Wigner function, $\sn{\rho}\equiv\sum_{\boldsymbol{u}:W_{\rho}(\boldsymbol{u})<0}\abs{W_{\rho}\left(\boldsymbol{u}\right)}\equiv\frac{1}{2}\left(\sum_{\boldsymbol{u}}\abs{W_{\rho}\left(\boldsymbol{u}\right)}-1\right)$.

The right hand side of this expression follows because the normalization
of quantum states ($\Tr\rho=1$) implies $\sum_{\boldsymbol{u}}W_{\rho}\left(\boldsymbol{u}\right)=1$.
The advantage of writing the expression in this form is that $\wignorm{\rho}\equiv\sum_{\boldsymbol{u}}\abs{W_{\rho}\left(\boldsymbol{u}\right)}$
is a multiplicative norm and is thus very nice to work with. By this
we mean that the composition law is given as:
\begin{eqnarray}
\wignorm{\rho\otimes\sigma} & = & \sum_{\boldsymbol{u},\boldsymbol{v}}\abs{W_{\rho\otimes\sigma}\left(\boldsymbol{u}\oplus\boldsymbol{v}\right)}\nonumber \\
 & = & \sum_{\boldsymbol{u},\boldsymbol{v}}\abs{W_{\rho}\left(\boldsymbol{u}\right)W_{\sigma}\left(\boldsymbol{v}\right)}\nonumber \\
 & = & \left(\sum_{\boldsymbol{u}}\abs{W_{\rho}\left(\boldsymbol{u}\right)}\right)\left(\sum_{\boldsymbol{v}}\abs{W_{\rho}\left(\boldsymbol{v}\right)}\right).\label{eq:wignorm_composition}
\end{eqnarray}

Since the sum negativity is a linear function of this quantity we
can establish that the former is a magic monotone by showing this
for the latter:\end{defn}
\begin{thm}
The sum negativity is a magic monotone.\end{thm}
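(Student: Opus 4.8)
The plan is to verify that the unnormalized quantity $\wignorm{\cdot}$ behaves well under each of the five elementary operations generating a stabilizer protocol, and then observe that since $\sn{\rho}=\tfrac12\left(\wignorm{\rho}-1\right)$ is an affine rescaling of $\wignorm{\cdot}$ by a fixed constant, any ensemble $\{p_{i},\rho_{i}\}$ produced by a stabilizer protocol (so $\sum_{i}p_{i}=1$) obeys $\sum_{i}p_{i}\sn{\rho_{i}}\le\sn{\rho}$ exactly when $\sum_{i}p_{i}\wignorm{\rho_{i}}\le\wignorm{\rho}$. So it suffices to control $\wignorm{\cdot}$ in the averaged sense along each elementary step and compose.

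The first three cases are short. Clifford unitaries permute phase space, so $\wignorm{\cdot}$ is invariant under $\rho\mapsto U\rho U^{\dagger}$. For composition with a stabilizer state $S$, multiplicativity of $\wignorm{\cdot}$ together with $W_{S}\ge 0$ gives $\wignorm{\rho\otimes S}=\wignorm{\rho}\cdot\sum_{v}W_{S}(v)=\wignorm{\rho}\cdot\Tr S=\wignorm{\rho}$, so $\wignorm{\cdot}$ (hence $\sn{\cdot}$) is again invariant, consistent with the remark that a monotone must be invariant under reversible stabilizer operations. For the partial trace of the last qudit, the fact that the phase-point operators factor as tensor products and have unit trace makes the Wigner function of $\Tr_{n}\rho$ the marginal $W_{\Tr_{n}\rho}(u')=\sum_{u_{n}}W_{\rho}(u',u_{n})$, and the triangle inequality gives $\wignorm{\Tr_{n}\rho}\le\wignorm{\rho}$.

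The substantive case is computational-basis measurement on the last qudit. Write $P_{i}=\mathbb{I}\otimes\ketbra ii$, $\tilde\rho_{i}=P_{i}\rho P_{i}$, $p_{i}=\Tr\left(\tilde\rho_{i}\right)$ and $\rho_{i}=\tilde\rho_{i}/p_{i}$; then $\sum_{i}p_{i}\wignorm{\rho_{i}}=\sum_{i}\wignorm{\tilde\rho_{i}}$ by homogeneity of $\wignorm{\cdot}$, so the goal is $\sum_{i}\wignorm{\tilde\rho_{i}}\le\wignorm{\rho}$. Expanding $\rho=\sum_{\boldsymbol u}W_{\rho}(\boldsymbol u)A_{\boldsymbol u}$, using the tensor factorization $A_{(u',u_{n})}=A_{u'}\otimes A_{u_{n}}$ and the identity $\bra i A_{u_{n}}\ket i = d\,W_{\ketbra ii}(u_{n})$, one finds $W_{\tilde\rho_{i}}(u',u_{n})=d\,W_{\ketbra ii}(u_{n})\sum_{w}W_{\ketbra ii}(w)\,W_{\rho}(u',w)$. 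Since $\ketbra ii$ is a stabilizer state, $W_{\ketbra ii}\ge0$ and $\sum_{u_{n}}W_{\ketbra ii}(u_{n})=\Tr\ketbra ii=1$; summing $\abs{W_{\tilde\rho_{i}}}$ over $u_{n}$ and then applying the triangle inequality to the $w$-sum yields $\sum_{i}\wignorm{\tilde\rho_{i}}\le d\sum_{u',w}\bigl(\sum_{i}W_{\ketbra ii}(w)\bigr)\abs{W_{\rho}(u',w)}$, and the completeness relation $\sum_{i}\ketbra ii=\mathbb{I}$ makes $\sum_{i}W_{\ketbra ii}(w)=W_{\mathbb{I}}(w)=1/d$, so the right-hand side collapses to exactly $\wignorm{\rho}$. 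Finally, conditioning operations on earlier measurement outcomes or on classical randomness only forms further convex combinations of ensembles, under which both $\wignorm{\cdot}$ and the affine quantity $\sn{\cdot}$ are controlled in the required averaged sense; composing the bounds above along any stabilizer protocol gives the theorem.

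The main obstacle I anticipate is the measurement step: one has to propagate the sandwiching projection $P_{i}\rho P_{i}$ correctly through the Wigner transform — which needs the tensor factorization of the phase-point operators and the identity $\bra i A_{u_{n}}\ket i = d\,W_{\ketbra ii}(u_{n})$ — and then arrange the two applications of the triangle inequality together with the completeness relation $\sum_{i}\ketbra ii=\mathbb{I}$ so that the outcome-averaged sum telescopes back to $\wignorm{\rho}$ with no multiplicative loss. The other four cases, and the bookkeeping for conditional operations, are routine.
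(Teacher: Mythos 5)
Your proof is correct and follows essentially the same route as the paper's: reduce to the Wigner one-norm $\wignorm{\cdot}$ via the affine relation $\sn{\rho}=\tfrac12(\wignorm{\rho}-1)$, then verify invariance under Cliffords and stabilizer ancillas, the triangle-inequality bound for partial trace, and the measurement step via the tensor factorization of the phase-point operators, positivity of $\braopket{i}{A_{\boldsymbol{w}}}{i}$, and the completeness relation $\sum_i\ketbra{i}{i}=\mathbb{I}$. The only cosmetic difference is that you phrase the measurement bookkeeping in terms of $W_{\ketbra{i}{i}}$ rather than $\braopket{i}{A_{\boldsymbol{w}}}{i}$ directly; the cancellations are identical.
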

\begin{proof}
It suffices to show $\sum_{\boldsymbol{u}}\abs{W_{\rho}\left(\boldsymbol{u}\right)}$
is a magic monotone by verifying the required properties. The main
components are the use of $\rho=\sum_{\boldsymbol{u}}W_{\rho}(\boldsymbol{u})A_{\boldsymbol{u}}$
and the composition identity \ref{eq:wignorm_composition}, which
is the main motivation for working with this quantity rather than
with the sum negativity directly. See appendix \ref{sub:wig_norm_monotone_proof}
for details. 
\end{proof}
The sum negativity is an intuitively appealing way of using the Wigner
function to define a magic monotone, but it has some irritating features.
The worst of these is the composition law,
\[
\sn{\rho^{\otimes n}}=\frac{1}{2}\left[\left(2\sn{\rho}+1\right)^{n}-1\right],
\]
which has the troubling feature that a linear increase in the number
of resource states implies an exponential increase the amount of resource
according to the measure. Happily there is a simple resolution to
this problem suggested by the composition law, \eqref{wignorm_composition}. 
We define a new monotone by a particular function of the sum negativity:
\begin{defn}
The mana of a quantum state $\rho$ is $\mana{\rho}\equiv\log\left(\sum_{\boldsymbol{u}}\abs{W_{\rho}\left(\boldsymbol{u}\right)}\right)=\log\left(2\sn{\rho}+1\right)$.
\end{defn}

\begin{thm}
The mana is a magic monotone. 
\end{thm}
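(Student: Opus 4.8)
The plan is to obtain this immediately from the previous theorem --- that the sum negativity $\sn{\cdot}$ is a magic monotone --- by observing that $\mana{\cdot}$ is the composition of $\sn{\cdot}$ with the map $g:x\mapsto\log(2x+1)$, and that $g$ is both \emph{non-decreasing} and \emph{concave} on $[0,\infty)$, which is the range of $\sn{\cdot}$ since $\sn{\rho}\ge 0$ always. I expect the only point requiring real care to be the ``on average'' clause in the definition of a magic monotone: a monotonically increasing reparametrization of a monotone is not automatically a monotone once probabilistic branches are allowed, so one cannot simply argue ``$g$ increasing $\Rightarrow g\circ\sn{\cdot}$ is a monotone.'' Concavity of $g$ is precisely what closes this gap.

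First I would dispose of the structural requirements. Since $\sn{\cdot}$ is invariant under Clifford unitaries and under composition with stabilizer states (these being the reversible operations, as noted above), $\mana{\cdot}=g(\sn{\cdot})$ inherits this invariance verbatim. For a deterministic computational-basis measurement branch or for a partial trace, $\sn{\cdot}$ does not increase, and since $g$ is non-decreasing neither does $\mana{\cdot}$. It then remains only to treat probabilistic branching.

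The core step is this averaging inequality. Suppose $\Lambda$ is a stabilizer protocol with $\Lambda(\rho)=\sigma_i$ with probability $p_i$. Because $\sn{\cdot}$ is a magic monotone we have $\sum_i p_i\,\sn{\sigma_i}\le\sn{\rho}$. Then, applying Jensen's inequality to the concave function $g$ and afterwards using that $g$ is non-decreasing,
\begin{align*}
\sum_i p_i\,\mana{\sigma_i} &= \sum_i p_i\,\log\!\left(2\sn{\sigma_i}+1\right)\\
&\le \log\!\left(\sum_i p_i\left(2\sn{\sigma_i}+1\right)\right)\\
&= \log\!\left(2\sum_i p_i\,\sn{\sigma_i}+1\right)\\
&\le \log\!\left(2\sn{\rho}+1\right) = \mana{\rho}.
\end{align*}
This is exactly the statement that $\mana{\cdot}$ is non-increasing on average under $\Lambda$, which together with the invariance facts above establishes that the mana is a magic monotone.

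I do not anticipate a genuine obstacle --- the proof is short --- but two remarks are worth recording. The general principle extracted is that any non-decreasing concave function of a magic monotone is again a magic monotone, which is the abstract reason the $\log$ in the definition of $\mana{\cdot}$ is harmless. And one should be careful to separate monotonicity from the additivity $\mana{\rho\otimes\sigma}=\mana{\rho}+\mana{\sigma}$: the latter follows at once from the multiplicativity of $\wignorm{\cdot}$ in \eqref{wignorm_composition}, is strictly stronger, and is in fact the motivation for passing from $\sn{\cdot}$ to $\mana{\cdot}$ in the first place, but it plays no role in the monotonicity argument. Full details can be deferred to the appendix.
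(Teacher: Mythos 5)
Your proof is correct and follows essentially the same route as the paper's: the structural requirements are inherited because $\log$ (equivalently your $g$) is non-decreasing, and the probabilistic-branching case is handled by combining concavity of the logarithm (Jensen) with the averaged monotonicity of the sum negativity, i.e. $\wignorm{\rho}\ge\sum_i p_i\wignorm{\sigma_i}$. Your explicit identification of the general principle --- that a non-decreasing concave function of a monotone is again a monotone --- is exactly the ``small subtlety'' the paper flags, so nothing further is needed.
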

\begin{proof}
Most of the monotone requirements
follow because log is a monotonic function, but there is a small subtlety
here. Consider a stabilizer protocol that sends $\rho\rightarrow\sigma_{i}$
with probability $p_{i}$ (e.g., post-selected computational basis measurement).
Then we require $\log\left(\wignorm{\rho}\right)\ge\sum_{i}p_{i}\log\left(\wignorm{\sigma_{i}}\right)$.
This need not be true for arbitrary monotonic functions of $\wignorm{\rho}$
but it is easy to see that it follows from the concavity of $\log$
and $\wignorm{\rho}\ge\sum_{i}p_{i}\wignorm{\sigma_{i}}$. 
\end{proof}

From \eqref{wignorm_composition} this monotone is additive in the
sense
\[
\mana{\rho\otimes\sigma}=\mana{\rho}+\mana{\sigma}.
\]
Beyond its intuitive appeal, additivity is a nice feature for a monotone
to have because it makes the bound on distillation efficiency take
an especially nice form. How many copies $n$ of a resource magic
state $\rho$ are required to distill $m$ copies of a resource magic
state $\sigma$? Suppose we have a stabilizer protocol $\Lambda\left(\rho^{\otimes n}\right)\rightarrow\sigma_{i}\text{ with probability }p_{i}$,
then the monotone condition combined with additivity shows:
\[
\sum_{i}p_{i}\mana{\sigma_{i}}\le n\mana{\rho}.
\]
Taking $\sigma_{0}=\sigma$ and $p_{0}=p$, the above discussion lets
us see:
\begin{thm}
Suppose $\Lambda$ is a stabilizer protocol that consumes resource
states $\rho$ to produce $m$ copies of target state $\sigma$, succeeding
probabilistically. Any such protocol requires at least $\mathbb{E}\left[n\right]\ge m\frac{\mana{\sigma}}{\mana{\rho}}$
copies of $\rho$ on average.\label{thm:mojo_distill_bound}\end{thm}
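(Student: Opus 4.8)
The plan is to read the bound off the two structural properties of the mana already established: it is a magic monotone, hence non-increasing \emph{on average} under any stabilizer protocol, and it is additive, $\mana{\rho^{\otimes n}}=n\mana{\rho}$. I will also use the elementary fact that $\mana{\tau}\ge 0$ for every state $\tau$, since $\sn{\tau}\ge 0$ and $\log$ is increasing, so $\mana{\tau}=\log(2\sn{\tau}+1)\ge 0$.

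First I would recast an adaptive, probabilistic protocol as a statement about expectations over its tree of measurement outcomes. Regard a run of $\Lambda$ as a sequence of elementary steps: at certain steps a fresh copy of $\rho$ is drawn into the workspace, and at the remaining steps we apply a Clifford, a computational-basis measurement, a partial trace, or a classically-controlled combination thereof. Let $n$ be the (in general random) number of copies of $\rho$ consumed along a run and put $N=\mathbb{E}[n]$. Tracking $\mana{\cdot}$ evaluated on the current global state, each drawing step raises it by exactly $\mana{\rho}$ (additivity), while every other step is non-increasing in conditional expectation (the monotone property applied one outcome-branching at a time). Summing these contributions over the outcome tree yields the core inequality
\[
\mathbb{E}\!\left[\mana{\psi_{\mathrm{out}}}\right]\;\le\;N\,\mana{\rho},
\]
where $\psi_{\mathrm{out}}$ is the (random) output state. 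I expect this bookkeeping to be the main obstacle: one has to make the supermartingale/optional-stopping argument precise, checking that the number of consumed copies is a genuine stopping time and that the ``on average'' clause in the definition of a magic monotone is invoked correctly at each branching rather than globally.

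Granting that inequality, the conclusion is immediate. On a successful branch the output is $\sigma^{\otimes m}$, of mana $m\,\mana{\sigma}$ by additivity; on every other branch the output has mana $\ge 0$. Hence $\mathbb{E}[\mana{\psi_{\mathrm{out}}}]\ge m\,\mana{\sigma}$ when the $m$ target copies are produced with certainty, and $\ge p\,m\,\mana{\sigma}$ when the protocol succeeds only with probability $p$ --- in which case repeating until success scales both the expected number of consumed copies and this lower bound by $1/p$, leaving their ratio unchanged. Combining with $\mathbb{E}[\mana{\psi_{\mathrm{out}}}]\le N\,\mana{\rho}$ and rearranging gives $\mathbb{E}[n]=N\ge m\,\mana{\sigma}/\mana{\rho}$, as claimed. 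A less slick but stopping-time-free variant would first prove the statement for protocols consuming a fixed number $n$ of copies, where the chain $p\,m\,\mana{\sigma}\le\sum_i p_i\mana{\sigma_i}\le\mana{\rho^{\otimes n}}=n\,\mana{\rho}$ follows directly from monotonicity, additivity, and non-negativity of the mana, and then pass to adaptive protocols by padding to a fixed length and averaging over the pruned branches; I would present whichever rendering is cleaner, since the mathematical content is identical.
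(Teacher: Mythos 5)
Your proposal is correct and, in substance, identical to the paper's proof: the paper takes exactly your ``less slick variant,'' namely a protocol consuming a fixed number $k$ of copies, the chain $k\,\mana{\rho}=\mana{\rho^{\otimes k}}\ge p\,m\,\mana{\sigma}$ from additivity, monotonicity on average, and non-negativity of the mana on failure branches, followed by $\mathbb{E}[n]=k/p$ for repetition until success. The supermartingale/stopping-time framing you lead with is a harmless generalization to adaptive consumption that the paper does not bother with.
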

\begin{proof}
Suppose $\Lambda\left(\rho^{\otimes k}\right)=\sigma^{\otimes m}\text{ with probability }p$.
The fact that the mana is an additive magic monotone implies:
\[
k\mana{\rho}\ge p \; m\mana{\sigma}\implies\frac{k}{p}\ge m\frac{\mana{\sigma}}{\mana{\rho}}
\]
Letting $l$ be the number of times we must run the protocol to get
a success we have $n=kl$ and, 
\[
\mathbb{E}\left[l\right]=\frac{1}{p},
\]
from which it follows that $\mathbb{E}\left[n\right]=\frac{k}{p}\ge m\frac{\mana{\sigma}}{\mana{\rho}}$.
\end{proof}
We can only bound the average number of copies required because the
monotone is only non-increasing on average under stabilizer operations --- it might increase conditionally on a specific measurement outcome. 

The most common case for magic state distillation is nested distillation protocols, which a little thought will show are covered by the bound
as a special case. Indeed, this bound covers a broader set
of protocols than it might first appear. One might have expected to
do better by ``recycling'' the output states of the failed protocols.
For instance, if 
\[
\Lambda\left(\rho^{\otimes k}\right)=\begin{cases}
\sigma^{\otimes m} & \text{with probability }p\\
\tau & \text{with probability }1-p
\end{cases},
\]
then one expects to reduce the overhead of the total number of copies
$\rho$ required by introducing a second stabilizer protocol:
\[
\mathcal{E}\left(\tau\otimes\rho^{\otimes k'}\right)=\sigma^{\otimes m}\text{ with probability }q.
\]
However, by just combining the two steps we have a new protocol $\tilde{\Lambda}\left(\rho^{\otimes\left(k+k'\right)}\right)=\sigma^{\otimes m}\text{ with probability }\tilde{p}=p+\left(1-p\right)q$
and our theorem applies.

Computing the mana of a quantum state is straightforward: we find
the Wigner function by taking the trace of $\rho$ with the $d^{2}$
phase space point operators and compute the mana directly. This means
that the mana provides a simple way to numerically upper bound the
efficiency of distillation protocols, fulfilling the major promise
of this section.

\subsection{Uniqueness of sum negativity}

Quantifying the magic of a state by the negativity in its Wigner representation
is an intuitively appealing idea, but it is not clear that the sum
of the negative elements is the best way to do this. For example,
we might have instead looked at the maximally negative element of
the Wigner function, $\text{maxneg}\left(\rho\right)=-\min_{\boldsymbol{u}}W_{\rho}\left(\boldsymbol{u}\right)$.
It is not immediately obvious that the sum negativity is a better
way to quantify the magic of a quantum state than the maximal negativity
just defined. However, it turns out that the maximal negativity is not a magic
monotone, so it is not a useful measure of resources for stabilizer
computation. In fact, we will now show that \emph{any} magic monotone
that is determined solely by the values of the negative entries of
the Wigner function (and in particular not by the positions in phase
space of the negative entries) can be written as a function of only
the sum negativity. 

The reason that the maximally negative entry is not a magic monotone
is that it is not invariant under composition with stabilizer states.
Suppose we have some resource state $\rho$ and we compose it with
the maximally mixed state on a qudit $\mathbb{I}_{d}/d$.  Then $\text{maxneg}\left(\rho\otimes\mathbb{I}_{d}/d\right)=-\min_{\boldsymbol{u},\boldsymbol{v}}W_{\rho}\left(\boldsymbol{u}\right)\cdot W_{\mathbb{I}/d}(\boldsymbol{v})=-\min_{\boldsymbol{u},\boldsymbol{v}}W_{\rho}\left(\boldsymbol{u}\right)\cdot\frac{1}{d^{2}}=\text{maxneg}\left(\rho\right)/d^{2}$.
Therefore, this function can decrease under composition with stabilizer states,
and thus can increase under partial trace: it is a poor measure of
the amount of resource in $\rho$. The natural requirement that magic monotones
must be invariant under composition with arbitrary stabilizer states
is an extremely strong one; it forms the backbone of our proof of
the uniqueness of the sum negativity. 
\begin{thm}
\label{thm:sum_neg_unique}Assume $\mono{\rho}$ is a function on
quantum states that satisfies the following conditions: 1. $\mono{\rho}$
is a magic monotone, 2. $\mono{\rho}$ is determined only by the negative
values of the Wigner function, and 3. $\mono{\rho}$ is invariant under
arbitrary permutations of discrete phase space (that is, even under
permutations that do not correspond to quantum transformations). Then
$\mono{\rho}$ may be written as a function of only $\sn{\rho}$. \end{thm}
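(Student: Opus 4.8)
The plan is to leverage the three hypotheses to pin down the functional dependence of $\mono{\cdot}$ on the Wigner function in two steps: first show $\mono{\rho}$ depends only on the \emph{multiset} of negative Wigner values (condition 2 plus condition 3), and then show that composition-invariance (which follows from the monotone property, as noted at the end of Section~\ref{sec:Magic-Monotones}) forces this multiset dependence to collapse onto the single number $\sn{\rho}$. For the first step, condition 2 already says $\mono{\rho}$ is a function of the tuple of negative entries together with their phase-space locations; condition 3 says it is invariant under arbitrary permutations of the $(d^n)^2$ phase-space points, so in particular it is invariant under any permutation of the labels of the negative entries. Hence $\mono{\rho} = f\bigl(\{W_\rho(\boldsymbol u) : W_\rho(\boldsymbol u) < 0\}\bigr)$ for some symmetric function $f$ of the unordered collection of negative values (with multiplicity).

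The heart of the argument is the second step: I want to show that $f$ actually only sees the \emph{sum} of that collection. The idea is to use composition with the maximally mixed qudit state $\mathbb{I}_d/d$, whose Wigner function is flat with value $1/d^2$ on every one of the $d^2$ points. Since $\mono{\cdot}$ must be \emph{invariant} (not merely non-increasing) under composition with a stabilizer state --- because this operation is reversible via partial trace --- we get $\mono{\rho \otimes \mathbb{I}_d/d} = \mono{\rho}$. But the negative Wigner values of $\rho \otimes \mathbb{I}_d/d$ are exactly the negative values of $\rho$, each repeated $d^2$ times and each scaled by $1/d^2$. So if the original negative multiset is $\{w_1, \dots, w_k\}$, the new one is $\{w_j/d^2 : j=1,\dots,k,\ \text{each with multiplicity } d^2\}$, and $f$ must assign these the same value. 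Iterating with further copies of $\mathbb{I}_d/d$, and also composing with other stabilizer states (e.g. computational basis states, whose Wigner functions are nonnegative and supported on affine subspaces) to realize a rich family of replications-with-rescaling, I can generate from any negative multiset a large orbit of multisets all forced to have the same $\mono$ value. The claim is that the only invariant of a finite multiset of negative reals under the semigroup of operations ``replicate each element $m$ times and divide each by $m$'' (composed over varying $m$, and combined across multisets via the tensor/composition rule) is the total sum $\sum_j w_j$ --- because that sum is preserved by every such move, and conversely any two negative multisets with the same sum can be connected through such moves (in a suitable closure/limit sense), so $f$ must factor through the sum, i.e.\ $\mono{\rho} = g(\sn{\rho})$ for some $g$.

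The main obstacle I expect is making the connectivity claim in the previous paragraph rigorous: showing that the sum is not merely \emph{an} invariant but the \emph{complete} invariant. Replication-and-rescaling by $\mathbb{I}_d/d$ preserves the sum and lets me split mass evenly, but to merge two distinct negative values into one (or redistribute mass unequally) I need to compose with less trivial stabilizer states and carefully track which products of entries come out negative; since $W_\rho \otimes W_\sigma$ has a negative entry wherever exactly one factor is negative, composing with a non-positive-looking \emph{combination} is not directly available (stabilizer states have nonnegative Wigner functions by the discrete Hudson theorem), so all the "merging" has to be engineered from nonnegative building blocks, which constrains the moves. I anticipate handling this either by (i) a density/continuity argument --- the reachable multisets from a given sum are dense among all negative multisets with that sum, and $f$ is forced to be at least measurable or continuous by a separate argument --- or (ii) by directly exhibiting, for any target value $s<0$, an explicit sequence of stabilizer compositions driving an arbitrary $\rho$'s negative multiset toward the single-element multiset $\{s\}$ where $s=-\sn{\rho}$. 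A clean way to finish, if the full merging turns out delicate, is to note it suffices to compare $\mono{\rho}$ with $\mono{\sigma}$ whenever $\sn{\rho}=\sn{\sigma}$, and to route both through a common ``canonical'' state of that sum negativity built from tensor products of a fixed single-qudit magic state with appropriate mixed-state padding; I would isolate this reduction as the key lemma and expect it to be where the real work lies.
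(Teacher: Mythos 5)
Your first step (conditions 2 and 3 reduce $\mono{\rho}$ to a symmetric function $f$ of the multiset of negative Wigner values) matches the paper, and you correctly identify that the whole theorem hinges on showing $f$ factors through the sum by exploiting invariance under composition with stabilizer ancillas. But you stop exactly at the key lemma and neither of your two fallback routes would close it. Route (i) needs $f$ to be continuous or measurable, which is not among the hypotheses, so no density argument is available. Route (ii) is structurally impossible: tensoring $\rho$ with any stabilizer state $A$ (whose Wigner values $q_l$ are all nonnegative) replaces each negative entry $w_j$ by the collection $\{w_j q_l\}_l$ --- it can only \emph{split} negative mass, never merge two distinct negative values into one. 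Hence the single-element multiset $\{-\sn{\rho}\}$ is unreachable from a generic $\rho$, and a \emph{fixed} canonical state cannot serve as a common waypoint, since the product multiset $\{w_j q_l\}$ always retains a dependence on the particular values $w_j$.

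The paper's resolution goes in the opposite direction: instead of coarsening both multisets to a common target, it \emph{refines} both to a common product multiset using ancillas that each encode the \emph{other} state's negative values. Concretely, if $\rho$ has negative entries $-N_1,\dots,-N_k$ and $\rho'$ has $-N_1',\dots,-N_{k'}'$ with $\sum N_i=\sum N_j'=N$, take diagonal (hence stabilizer) ancillas $A=\sum_j (N_j'/N)\ketbra{j}{j}$ and $A'=\sum_i (N_i/N)\ketbra{i}{i}$ on $m$ qudits with $d^m\geq\max(k,k')$. Then both $\rho\otimes A$ and $\rho'\otimes A'$ have negative multiset $\{-N_i N_j'/(d^m N)\}_{i,j}$, each value with multiplicity $d^m$, so conditions 2 and 3 plus invariance under composition give $\mono{\rho}=\mono{\rho\otimes A}=\mono{\rho'\otimes A'}=\mono{\rho'}$. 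This cross-ancilla construction is the missing ingredient; without it, your argument establishes only that the sum is \emph{an} invariant of the reachable moves, not that it is the \emph{complete} one.
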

\begin{proof}
Consider two quantum states $\rho$ and $\rho'$ that have Wigner
representations with different negative entries but $\sn{\rho}=\sn{\rho'}$.
The idea is to construct stabilizer ancilla states $A$ and $A'$
such that $\rho\otimes A$ and $\rho'\otimes A'$ have the same negative
Wigner function entries. In this case conditions 2 and 3 imply $\mono{\rho\otimes A}=\mono{\rho'\otimes A'}$
and since magic monotones are invariant under composition with stabilizer
states this means $\mono{\rho}=\mono{\rho'}$, ie. $\mono{\rho}$
is entirely determined by the sum negativity. For details see appendix \ref{sub:sum_neg_unique_proof}.
\end{proof}
For our proof of \thmref{sum_neg_unique} to succeed it is critical
that the value of the monotone does not depend on the locations of
the negative entries. All magic monotones must be invariant under
Clifford unitaries, $\mono{U\rho U^{\dagger}}=\mono{\rho}\ \forall U\in C_{n}$,
and these operations correspond to permutations of the phase space.
Thus the monotone condition already implies invariance under a subset
of possible permutations (namely those that preserve the symplectic
inner product). However, we require invariance under arbitrary permutations
and there is no compelling reason to expect magic monotones to have
this feature in general. It is not clear whether this additional assumption
was really necessary; it was introduced because actually working with
only the symplectic transformations is extremely challenging. It remains
an interesting open problem to either prove uniqueness without this
assumption or else give a counterexample in the form of a magic monotone
that is determined by just the negative entries of the Wigner representation
and does depend on their position. Even if the latter resolution is the case,
\thmref{sum_neg_unique} is useful because it at least shows that
sum negativity is the unique ``simple'' monotone, in the sense that
computing its magnitude does not depend on the detailed symplectic structure of
phase space. As simplicity of computation is our primary motivation for
the study of Wigner function monotones, this is a significant advantage.

In \secref{Relative-Entropy} we showed that (the regularization of)
any monotone satisfying a certain natural asymptotic condition uniquely
specifies the rate at which asymptotic interconversion of resource
states is possible. Since the mana is additive, it is clearly equal
to its own regularization. Thus if it satisfied the condition given
by \eqref{asympt_condition} we would be able to compute the conversion
rates explicitly. Typically it is usually a stronger property that
is demanded: asymptotic continuity of the monotone. In appendix \ref{sub:mojo_continuity} we show that the mana is not asymptotically continuous. However,
our counterexample leaves open the possibility that the weaker condition
actually required by the theorem holds. It would be very exciting
to either prove or disprove this.

\subsection{Numerical Analysis of Magic State Distillation Protocols}

To illustrate the use of mana in the evaluation of magic state distillation
protocols we have computed the input and output mana of single steps
of several (qudit) magic distillation protocols from the literature
over a large parameter range. Figures \ref{fig:[[5,1,3]] qutrit}
and \ref{fig:[[8,1,3]] qutrit} present qutrit codes from \cite{AnwarQutritMSD}
and \cite{CampbellMSDAllPrimeDim} respectively. Figure \ref{fig:[[4,1,2]] ququint}
presents a ququint ($d=5$) code from \cite{CampbellMSDAllPrimeDim}.
Notice that none of the protocols come close to meeting the mana bound,
which is illustrated as a red line in all three figures. 

\begin{figure}[h]
\includegraphics[width=7cm]{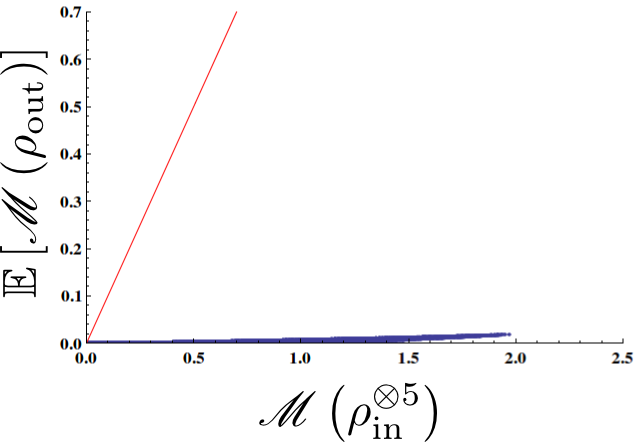}

\caption{Efficiency of the $[[5,1,3]]_3$ qutrit code of \cite{AnwarQutritMSD}.
We generate 50000 inputs of the form $\rho_{\text{in}}=\left(1-p_{1}-p_{2}\right)\ketbra{H_{+}}{H_{+}}+p_{1}\ketbra{H_{-}}{H_{-}}+p_{2}\ketbra{H_{i}}{H_{i}},$
which is the form $\rho_{\text{in}}$ takes after the twirling
step of the protocol. The mana of the $5$ input states is computed
and plotted against the effective mana output following one round
of the protocol, $\mathbb{E}\left[\mathscr{M}\left(\rho_{\text{out}}\right)\right]=\Pr\left(\text{protocol succeeds}\right)\cdot\mana{\rho_{\text{out}}}.$
We used $p_{1}\in_{R}\left[0,0.4\right]$ and $p_{2}\in_{R}\left[0,0.3\right]$,
and the twirling basis states are the eigenstates of the qutrit Hadamard
operator\cite{AnwarQutritMSD}, with eigenvalues $\left\{ 1,-1,\imath\right\} $.
\label{fig:[[5,1,3]] qutrit}}
\end{figure}

\begin{figure}[h]
\includegraphics[width=7cm]{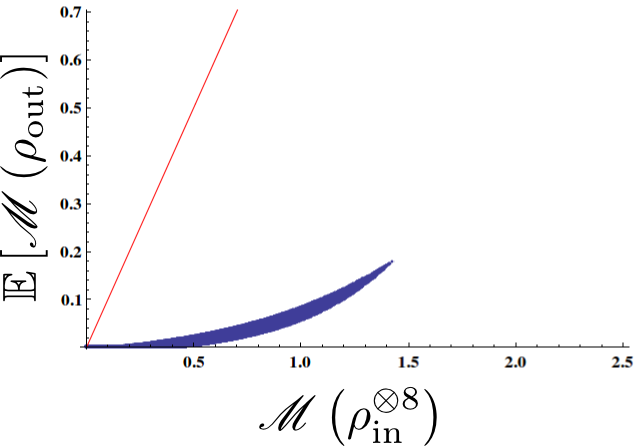}

\caption{Efficiency of the $[[8,1,3]]_3$ qutrit code of \cite{CampbellMSDAllPrimeDim}.  We
generate 50000 inputs of the form $\rho_{\text{in}}=\left(1-p_{1}-p_{2}\right)\ketbra{M_{0}}{M_{0}}+p_{1}\ketbra{M_{1}}{M_{1}}+p_{2}\ketbra{M_{2}}{M_{2}},$
which is the form $\rho_{\text{in}}$ takes after the twirling
step of the protocol. The mana of the $8$ input states is computed
and plotted against the effective mana output following one round
of the protocol, $\mathbb{E}\left[\mathscr{M}\left(\rho_{\text{out}}\right)\right]=\Pr\left(\text{protocol succeeds}\right)\cdot\mana{\rho_{\text{out}}}.$
We used $p_{1}\in_{R}\left[0,0.3\right]$, $p_{2}\in_{R}\left[0,0.3\right]$,
and the twirling basis states are $\ket{M_{0}}=\frac{1}{\sqrt{3}}\left(\mathrm{e}^{\frac{4}{9}\pi\mathrm{i}}\ket 0+\mathrm{e}^{\frac{2}{9}\pi\mathrm{i}}\ket 1+\ket 2\right),\ \ket{M_{1}}=\frac{1}{\sqrt{3}}\left(\mathrm{e}^{\frac{16}{9}\pi\mathrm{i}}\ket 0+\mathrm{e}^{\frac{8}{9}\pi\mathrm{i}}\ket 1+\ket 2\right),\ \ket{M_{2}}=\frac{1}{\sqrt{3}}\left(\mathrm{e}^{\frac{10}{9}\pi\mathrm{i}}\ket 0+\mathrm{e}^{\frac{14}{9}\pi\mathrm{i}}\ket 1+\ket 2\right)$.
\label{fig:[[8,1,3]] qutrit}}
\end{figure}

\begin{figure}[h]
\includegraphics[width=7cm]{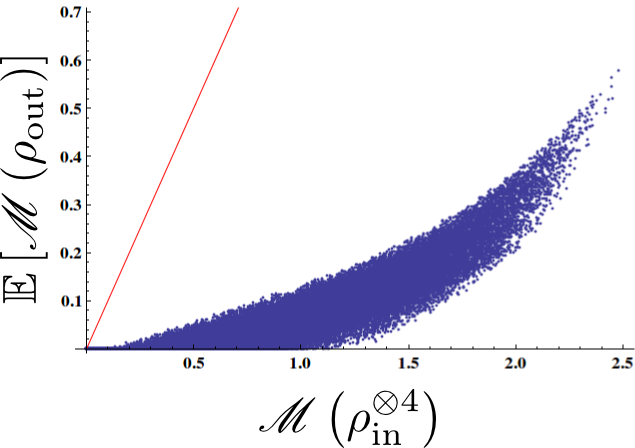}

\caption{Efficiency of the $[[4,1,2]]_5$ ququint code of \cite{CampbellMSDAllPrimeDim}.  We
generate 50000 inputs of the form $\rho_{\text{in}}=\left(1-p_{1}-p_{2}-p_{3}-p_{4}\right)\ketbra{M_{0}}{M_{0}}+\sum_{i=1}^{4}p_{i}\ketbra{M_{i}}{M_{i}},$
which this is the form $\rho_{\text{in}}$ takes after the twirling
step of the protocol. The mana of the $4$ input states is computed
and plotted against the effective mana output following one round
of the protocol, $\mathbb{E}\left[\mathscr{M}\left(\rho_{\text{out}}\right)\right]=\Pr\left(\text{protocol succeeds}\right)\cdot\mana{\rho_{\text{out}}}.$
We used $p_{i}\in_{R}\left[0,0.2\right]$, and the twirling basis states
are the eigenstates of the $CM$ ququint operator defined in \cite{CampbellMSDAllPrimeDim}.
\label{fig:[[4,1,2]] ququint}}
\end{figure}

\subsection{The Qutrit Case\label{sub:The-Qutrit-Case}}

It's interesting to compute the qutrit states with maximal
sum negativity. Since
\begin{eqnarray*}
\sn{\rho} & = & -\sum_{\boldsymbol{u}:\Tr\left(\rho A_{\boldsymbol{u}}\right)<0}\Tr\left(\rho A_{\boldsymbol{u}}\right)\\
 & = & -\Tr\left(\rho\sum_{\boldsymbol{u}:\Tr\left(\rho A_{\boldsymbol{u}}\right)<0}A_{\boldsymbol{u}}\right),
\end{eqnarray*}
it is easy to see that the states with maximal sum negativity must
be eigenstates of operators $\sum_{\boldsymbol{u}\in S}A_{\boldsymbol{u}}$
where $S$ is some subset of the discrete phase space. An exhaustive
search over such subsets reveals two classes of maximally sum negative
states.
\begin{enumerate}
\item The Strange states defined to be those with 1 negative Wigner function entry equal to
$-1/3$. There are ${9 \choose 1}=9$ such states, e.g. 
\[
\ket{\mathbb{S}}=\frac{1}{\sqrt{2}}\begin{pmatrix}0\\
1\\
-1
\end{pmatrix}
\]

\item The Norrell states defined to be those with 2 negative Wigner function entries equal to
$-1/6$. There are ${9 \choose 2}=36$ such states, e.g.
\[
\ket{\mathbb{N}}=\frac{1}{\sqrt{6}}\begin{pmatrix}-1\\
2\\
-1
\end{pmatrix}.
\]

\end{enumerate}
The maximum value is $\sn{\strangestate}=\sn{\norrellstate}=+1/3$.
An example of each type of state is given in \figref{qutrit_wigs}.

\begin{figure}
\includegraphics[width=3cm]{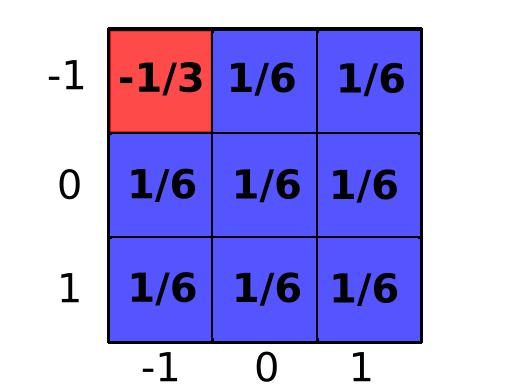}\includegraphics[width=3cm]{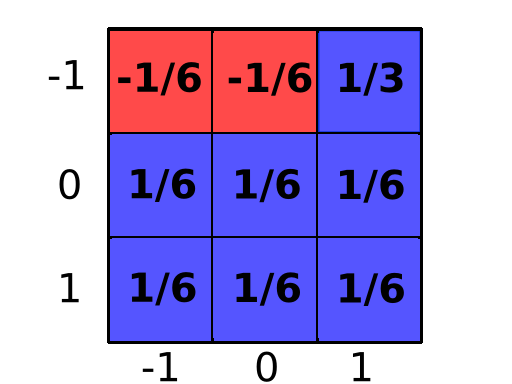}

\caption{The Wigner representations of two qutrit states, $\ket{\mathbb{S}}=\frac{1}{\sqrt{2}}\left(\ket 1-\ket 2\right)$
(left) and $\ket{\mathbb{N}}=\frac{1}{\sqrt{6}}\left(-\ket 0+2\ket 1-\ket 2\right)$
(right). $\ket{\mathbb{S}}$ has sum negativity $\abs{-\frac{1}{3}}$
and the $\ket{\mathbb{N}}$ has sum negativity $\abs{-\frac{1}{6}-\frac{1}{6}}=\frac{1}{3}$.
\textcolor{red}{\label{fig:qutrit_wigs}}}
\end{figure}

Geometrically each Strange state lies outside a single face of the
Wigner simplex and each Norrell state lies outside the intersection
of two faces, analogous to the qubit T-type (outside a face) and H-type
(outside an edge) states. This analogy is further strengthened since
the Norrell states are also the generalized H-type states of \cite{HowardQuditPiOver8}
and \cite{AnwarQutritMSD}.

Note that the states with maximal resource value do not need to agree
between monotones. In particular,
\[
\frac{\relent{\strangestate}}{\relent{\norrellstate}}\approx1.71.
\]
Of course this still leaves open the possibility that $\regrelent{\strangestate}=\regrelent{\norrellstate}$.

\begin{figure}
\includegraphics[scale=0.6]{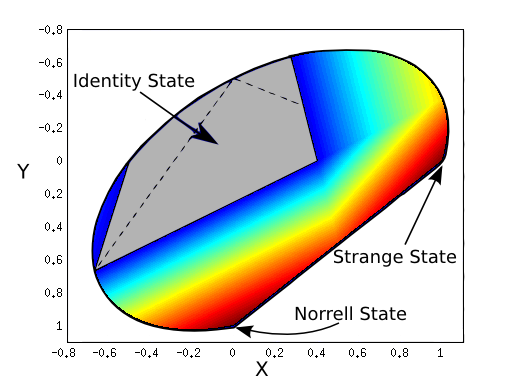}

\caption{The plane $\left(1-x-y\right)\frac{\mathbb{I}}{3}+x\strangestate+y\norrellstate$.
The heat map shows the value of the mana. The light grey ($0$ mana)
region is the set of states in the Wigner simplex, ie. states with positive Wigner representation.
The stabilizer polytope is delineated by a dashed line.}
\end{figure}

\subsection{Wherefore the discrete Wigner function?}

Our main motivation for studying the mana is that it can be computed
explicitly to give concrete bounds on the rate at which magic states
can be converted. However, one might suspect that this bound, although
non-trivial, is rather arbitrary. For example, it is not clear a priori
if the bound given by \thmref{mojo_distill_bound} can ever be saturated,
or under what circumstances this might occur. The mana arose very
naturally from the negativity of the discrete Wigner function, but it is not immediately
clear that the Wigner negativity  is the relevant tool
for the study of magic theory. However, a number of recent results
show that the negativity of the discrete Wigner representation is extremely well
motivated in this context. For example,  could we have started with some other notion
of quasi-probability representation \cite{Ferrie2008Frame} and defined a monotone from that?
Recent work \cite{HowardContextualityandComputation} has shown (at least for small prime dimension) that this
is not the case by connecting the onset of negativity in the DWF with the onset of a non-contextuality violation. 
This means that the subtheory of quantum theory consisting of elements with positive
discrete Wigner representation is a maximal classical subtheory
in the sense of non-classicality given by contextuality. That is,
the set of states with positive discrete Wigner representation is the largest
possible subtheory of quantum theory that includes the stabilizer
measurements and admits a non-contextual hidden variable theory. In
particular this means that any other choice of quasi-probability representation (that represents the stabilizer subtheory non-negatively) would have a positively represented region that is strictly contained
within the discrete Wigner function we use here.

For the purposes of magic state distillation we are more interested in the notion of
non-classicality given by universal quantum computation. The results
of \cite{veitch2012BoundMagicNJP,Mari2013_pos_wig_efficient_sim} show that there
is an intimate connection: the hidden variable model afforded by the
discrete Wigner function leads naturally to an efficient classical
simulation scheme for quantum circuits with positive Wigner representation.
It is not known if access to any negatively represented state suffices
to promote stabilizer computation to universal quantum computation,
but it is at least apparent that the known classical simulation protocols
cannot be extended to deal with this case. In the context of magic
state computation it is desirable for the magic measures to give an
indication of how useful a state is for quantum computation. In this
sense, the fact that the mana is not a faithful monotone is a feature
rather than a bug --- it picks specifically the set of quantum states that do not admit an efficient simulation scheme under stabilizer operations. 

Although the mana is essentially the unique symmetric monotone arising from
the negativity of the Wigner function, it is not the only choice of
monotone arising from the Wigner function. In particular, one very
natural choice is the relative entropy distance to the set of states
with positive Wigner representation, ${\rm \text{r}_{\mathcal{W}}\left(\rho\right)=\min_{\sigma:W_{\sigma}\left(\boldsymbol{u}\right)\ge0\,\forall\boldsymbol{u}}S\left(\rho\|\sigma\right)}$.
It is easy to check that all of the results of \secref{Relative-Entropy}
go through for this new monotone, subject to obvious modifications
in the statement of the theorems.

\subsection{Discussion}

The major inspiration for the monotones of this section was earlier
work showing that states with positive Wigner representation cannot
be distilled by stabilizer protocols. In the theory of entanglement
it is known that states with positive partial transpose (ppt) cannot 
be distilled by LOCC protocols \cite{HorodeckMafiaBoundEnt}, and
this inspired the introduction of the entanglement negativity $\mathcal{N}\left(\rho\right)$,
a measure of the violation of the ppt condition, as a measure of entanglement \cite{VidalWernerComputableEntanglement}.
As with the sum negativity, the major advantage of this measure is
that it is computable, allowing for explicit upper bounds on the efficiency
of entanglement distillation. The entanglement negativity grows exponentially
in the number of resource states, prompting the definition of an additive
variant $\mathcal{LN}\left(\rho\right)\equiv\log\left(2\mathcal{N}\left(\rho\right)+1\right)$
--- exactly as in the present case. Like the mana this measure has the
strange features that it is neither convex nor asymptotically continuous.%
\footnote{In fact it is now known that these two features are closely related
\cite{PlenioLogarithmicNegativity}.%
} The close analogy we have uncovered suggests that it may be possible
to adapt much of the work on entanglement negativity to the magic
case: this is an interesting direction for future work. 

There is at least one way in which the sum negativity is better behaved
than the entanglement negativity. All separable states are local,
but this does not mean that all entangled states are non-local in the
sense that they enable violation of a Bell inequality. In \cite{PeresPPTNonLocConjecture}
Peres conjectured that any ppt state should admit a local hidden variable
model; proving or disproving this conjecture is one of the major outstanding
problems in the study of entanglement. In our case the equivalent
conjecture would be that any state with positive Wigner representation
admits a non-contextual hidden variable model. But in our case the answer is obvious: the
Wigner itself is this non-contextual hidden variable theory! Moreover,
as noted above, recent work \cite{HowardContextualityandComputation} has shown 
(at least for small prime dimension) that magic states admit such a model
\emph{only if} they have positive Wigner representation. The direct resolution of this question (which has proven difficult to solve for other resource theories) is a consequence
of our use of the Wigner function (quasi-probability) technology. However, the
quasi-probability techniques used in this section have no known analogue
in other resource theories. The possibility of exporting this technology to the study
of other resource theories, in particular entanglement theory, is
a fascinating and promising direction for future work. 

A closely related problem is to determine a qubit analogue for the
mana. Because it is possible to violate a contextuality inequality
(eg. a GHZ inequality) using qubit stabilizers, there can be no qubit
analogue for the discrete Wigner function (see also \cite{Wallman2012QubitQuasiProb}).
This is because the discrete Wigner function is a non-contextual hidden
variable theory. Nevertheless, it may be possible to find a computable
monotone of a similar flavour.

\section{Discussion}

In this paper we have introduced the resource theory of magic, showing
how the tools of resource theories can be applied to study the extra
resources required to promote stabilizer computation to universal
quantum computation. In particular, we have introduced the concept
of magic monotone and given two examples: the relative entropy of
magic and the mana. 

The relative entropy of magic and its asymptotic variant are useful
tools for the holistic study of magic theory. In particular, we saw
that (even asymptotically) to create any magic state by consuming 
pure magic states via stabilizer operations a non-zero amount of pure magic states are required.
This established, in conjunction with the results
of \cite{veitch2012BoundMagicNJP}, that generally the amount of magic
that can be extracted from a magic state is not equal to the amount
required to create it: the magic of creation does not equal the magic
of distillation. The main motivation for studying the relative entropy
of magic was that its asymptotic regularization gives the correct rate for
asymptotic interconversion of magic states. However, as we saw, this is not a special
feature of the relative entropy of magic but a (potentially) common
feature among magic monotones. This is promising because the relative entropy of magic
has some serious drawbacks. Foremost among these are the lack of a
closed form expression and the fact that it is a subadditive monotone, even for pure magic
states. The combination of these two irritants implies that computing
the relative entropy of magic generally requires a numerical search
that is computationally infeasible.

To address this shortcoming we introduced the mana, a computable
monotone. We have shown this monotone has the appealing feature that
it is additive, $\mana{\rho\otimes\sigma}=\mana{\rho}+\mana{\sigma}$.
As a consequence, we may give explicit lower bounds on the number of
resource states $\rho$ required to produce $m$ copies of a resource
state $\sigma$. This is an explicit, absolute upper bound on the
efficiency of magic state distillation protocols. This monotone is
in some sense the unique measure of magic arising from the negativity
of the discrete Wigner function. Since the discrete Wigner function
itself is essentially the unique maximal classical representation for the
stabilizer formalism \cite{HowardContextualityandComputation}, there
is some reason to believe that the mana has some privileged status
among all possible monotones. Determining if and how this intuition
can be formalized is a very important open problem.

There are a number of directions for future work, many of which have
already been discussed in the main body of the text. Other resource
theories admit a wealth of monotones. This is especially true in the
theory of entanglement where a large number of entanglement measures
have been developed to solve specialized problems. One obvious direction
for future work is the creation of additional magic monotones to address
particular problems in magic resource theory. It is also important
to develop the parts of the resource theory that are not encapsulated
by magic monotones. For example, analogues of entanglement catalysis
and activation are discussed in \cite{Campbell_magic_state_catalysis}.
The most urgent outstanding problem of this type is to find a criterion
for determining if it is possible to (asymptotically) reversibly convert
between particular resource states using stabilizer operations. Concretely,
it is always possible to use LOCC to reversibly convert pure bipartite
entangled states but this is not true for tripartite entanglement;
we would like to know which situation holds for magic theory. Even
a partial result of this type would be very powerful, offering deep
insight into the structure of stabilizer protocols. 

Much of this paper has been dedicated to showing that much of the technology from
other resource theories can be imported to the resource theory of
magic. It is very interesting to ask if we can go in the other direction
and export the insights of magic theory to the study of generic resource
theories and quantum theory broadly. One obvious extension of this
type is to the setting of linear optics, which is the infinite dimensional
analogue of the stabilizer formalism. Some progress on this front
has already been made: it has been shown that linear optics
operations acting on states with positive Wigner function, which includes non-Gaussian states, is efficiently 
classically simulable\cite{VeitchEtAl_LinOpticsWigSimulation,Mari2013_pos_wig_efficient_sim}.
We should also mention \cite{Kenfack2004Negativity} which 
examined the volume of the negative region of the infinite-dimensional
Wigner function as a measure of non-classicality but did not explore the resource
theory implications. 

The study of entanglement theory offers powerful insights into the
power of quantum communication protocols. This is because of the close
relationship between LOCC and quantum communication. Similarly, there
is a close relationship between stabilizers and quantum computation
beyond the application of stabilizer codes to fault-tolerant quantum
computation. The stabilizer operations are a maximal subset of efficiently
simulable quantum operations in the sense that the addition of any
pure non-stabilizer resource promotes stabilizer computation to universal
quantum computation\cite{CampbellMSDAllPrimeDim}. This suggests that
the usefulness of the tools developed here may extend beyond the study
of magic state computation to give insights into the origins of quantum
computational speedup. 

\begin{acknowledgements}
 We thank Marco Piani, Chris Ferrie, Robert Spekkens and Earl Campbell for helpful comments and discussions. The authors acknowledge financial
support from  CIFAR, USARO-DTO, and the Government of Canada through NSERC. Research at Perimeter Institute is supported by the Government of Canada through Industry Canada and by the Province of Ontario through the Ministry of Research and Innovation.
\end{acknowledgements}

\bibliographystyle{plain}
\bibliography{stab_resource,magic_state,infinited2}

\begin{thebibliography}{10}

\bibitem{AnwarQutritMSD}
Hussain Anwar, Earl~T Campbell, and Dan~E Browne.
\newblock Qutrit magic state distillation.
\newblock {\em New Journal of Physics}, 14(6):063006, 2012.

\bibitem{BennettConcentratingEntanglementRT1}
Charles~H. Bennett, Herbert~J. Bernstein, Sandu Popescu, and Benjamin
  Schumacher.
\newblock Concentrating partial entanglement by local operations.
\newblock {\em Phys. Rev. A}, 53:2046--2052, Apr 1996.

\bibitem{BennettPurifyingEntanglementRT2}
Charles~H. Bennett, Gilles Brassard, Sandu Popescu, Benjamin Schumacher,
  John~A. Smolin, and William~K. Wootters.
\newblock Purification of noisy entanglement and faithful teleportation via
  noisy channels.
\newblock {\em Phys. Rev. Lett.}, 76:722--725, Jan 1996.

\bibitem{Brandao2010GenQuantumStein}
Fernando~G.S.L. Brandao and Martin~B. Plenio.
\newblock A generalization of quantum stein's lemma.
\newblock {\em Communications in Mathematical Physics}, 295:791--828, 2010.

\bibitem{BravyiMagicStateLowOverhead}
Sergey Bravyi and Jeongwan Haah.
\newblock Magic-state distillation with low overhead.
\newblock {\em Phys. Rev. A}, 86:052329, Nov 2012.

\bibitem{Bravyi_magic_state_paper}
Sergey Bravyi and Alexei Kitaev.
\newblock Universal quantum computation with ideal clifford gates and noisy
  ancillas.
\newblock {\em Phys. Rev. A}, 71:022316, Feb 2005.

\bibitem{Campbell_magic_state_catalysis}
Earl~T. Campbell.
\newblock Catalysis and activation of magic states in fault-tolerant
  architectures.
\newblock {\em Phys. Rev. A}, 83:032317, Mar 2011.

\bibitem{CampbellMSDAllPrimeDim}
Earl~T. Campbell, Hussain Anwar, and Dan~E. Browne.
\newblock Magic-state distillation in all prime dimensions using quantum
  reed-muller codes.
\newblock {\em Phys. Rev. X}, 2:041021, Dec 2012.

\bibitem{Campbell_bound_magic_states}
Earl~T. Campbell and Dan~E. Browne.
\newblock Bound states for magic state distillation in fault-tolerant quantum
  computation.
\newblock {\em Phys. Rev. Lett.}, 104:030503, Jan 2010.

\bibitem{EastinToffoliMSD}
Bryan Eastin.
\newblock Distilling one-qubit magic states into toffoli states.
\newblock {\em Phys. Rev. A}, 87:032321, Mar 2013.

\bibitem{Eastin2009Restrictions}
Bryan Eastin and Emanuel Knill.
\newblock {Restrictions on Transversal Encoded Quantum Gate Sets}.
\newblock {\em Physical Review Letters}, 102:110502+, March 2009.

\bibitem{Ferrie2008Frame}
Christopher Ferrie and Joseph Emerson.
\newblock {Frame representations of quantum mechanics and the necessity of
  negativity in quasi-probability representations}.
\newblock {\em Journal of Physics A: Mathematical and Theoretical},
  41(35):352001+, 2008.

\bibitem{Ferrie2010Necessity}
Christopher Ferrie, Ryan Morris, and Joseph Emerson.
\newblock {Necessity of negativity in quantum theory}.
\newblock {\em \pra}, 82:044103, 2010.

\bibitem{FowlerPrimerOnSurfaceCodes}
Austin~G. Fowler, Matteo Mariantoni, John~M. Martinis, and Andrew~N. Cleland.
\newblock Surface codes: Towards practical large-scale quantum computation.
\newblock {\em Phys. Rev. A}, 86:032324, Sep 2012.

\bibitem{Gibbons2004Discrete}
Kathleen~S. Gibbons, Matthew~J. Hoffman, and William~K. Wootters.
\newblock {Discrete phase space based on finite fields}.
\newblock {\em Physical Review A}, 70(6):062101+, December 2004.

\bibitem{Gottesman1997Stabilizer}
Daniel Gottesman.
\newblock {\em {Stabilizer Codes and Quantum Error Correction}}.
\newblock PhD thesis, California Institute of Technology, 1997.

\bibitem{Gottesman1998GKtheorem}
Daniel Gottesman.
\newblock The heisenberg representation of quantum computers.
\newblock {\em arXiv:quant-ph/987006}, 1998.

\bibitem{GottesmanIntroErrorCorr}
Daniel Gottesman.
\newblock An introduction to quantum error correction and fault-tolerant
  quantum computation.
\newblock In {\em Quantum Information Science and Its Contributions to
  Mathematics}. 2009.

\bibitem{Gottesman1999Teleportation}
Daniel Gottesman and Isaac Chuang.
\newblock Demonstrating the viability of universal quantum computation using
  teleportation and single-qubit operations.
\newblock {\em Nature}, 402:390--393, 1999.

\bibitem{Gour_rel_entropy_of_frameness}
Gilad Gour, Iman Marvian, and Robert~W. Spekkens.
\newblock Measuring the quality of a quantum reference frame: The relative
  entropy of frameness.
\newblock {\em Phys. Rev. A}, 80:012307, Jul 2009.

\bibitem{GourSpekkensAsymmetryRT}
Gilad Gour and Robert~W Spekkens.
\newblock The resource theory of quantum reference frames: manipulations and
  monotones.
\newblock {\em New Journal of Physics}, 10(3):033023, 2008.

\bibitem{Gross2006Hudsons}
D.~Gross.
\newblock {Hudson's theorem for finite-dimensional quantum systems}.
\newblock {\em Journal of Mathematical Physics}, 47(12):122107+, 2006.

\bibitem{GrossPrimeDimWignerFunction}
D.~Gross.
\newblock Non-negative wigner functions in prime dimensions.
\newblock {\em Applied Physics B}, 86(3):367--370, 2007.

\bibitem{Gross2007Evenly}
D.~Gross, K.~Audenaert, and J.~Eisert.
\newblock {Evenly distributed unitaries: On the structure of unitary designs}.
\newblock {\em Journal of Mathematical Physics}, 48(5):052104+, 2007.

\bibitem{HorodeckMafiaBoundEnt}
Micha\l{} Horodecki, Pawe\l{} Horodecki, and Ryszard Horodecki.
\newblock Mixed-state entanglement and distillation: Is there a ``bound''
  entanglement in nature?
\newblock {\em Phys. Rev. Lett.}, 80:5239--5242, Jun 1998.

\bibitem{HorodeckiPurityTheory}
Micha\l{} Horodecki, Pawe\l{} Horodecki, and Jonathan Oppenheim.
\newblock Reversible transformations from pure to mixed states and the unique
  measure of information.
\newblock {\em Phys. Rev. A}, 67:062104, Jun 2003.

\bibitem{Horodecki2012ResourceTheories}
Michal Horodecki and Jonathan Oppenheim.
\newblock (quantumness in the context of) resource theories.
\newblock {\em International Journal of Modern Physics B}, 27(01n03):1345019,
  2013.

\bibitem{Horodecki_Entang_Thermo}
Micha\l{} Horodecki, Jonathan Oppenheim, and Ryszard Horodecki.
\newblock Are the laws of entanglement theory thermodynamical?
\newblock {\em Phys. Rev. Lett.}, 89:240403, Nov 2002.

\bibitem{HorodeckiEntReview}
Ryszard Horodecki, Pawe\l{} Horodecki, Micha\l{} Horodecki, and Karol
  Horodecki.
\newblock Quantum entanglement.
\newblock {\em Rev. Mod. Phys.}, 81:865--942, Jun 2009.

\bibitem{HowardQuditPiOver8}
Mark Howard and Jiri Vala.
\newblock Qudit versions of the qubit $\pi/8$ gate.
\newblock {\em Phys. Rev. A}, 86:022316, Aug 2012.

\bibitem{HowardContextualityandComputation}
Mark Howard, Victor Veitch, and Joseph Emerson.
\newblock Negativity, contextuality and universal quantum computation.
\newblock In Preparation.

\bibitem{JonesFourierDistProtocols}
Cody Jones.
\newblock {Distillation protocols for Fourier states in quantum computing}.
\newblock {\em arXiv:1303.3066}, 2013.

\bibitem{Kenfack2004Negativity}
Anatole Kenfack and Karol \.Zyczkowski.
\newblock Negativity of the wigner function as an indicator of
  non-classicality.
\newblock {\em Journal of Optics B: Quantum and Semiclassical Optics},
  6(10):396, 2004.

\bibitem{LiebStrongSubAddEntropy}
Elliott~H. Lieb and Mary~Beth Ruskai.
\newblock A fundamental property of quantum-mechanical entropy.
\newblock {\em Phys. Rev. Lett.}, 30:434--436, Mar 1973.

\bibitem{Mari2013_pos_wig_efficient_sim}
A.~Mari and J.~Eisert.
\newblock Positive wigner functions render classical simulation of quantum
  computation efficient.
\newblock {\em Phys. Rev. Lett.}, 109:230503, Dec 2012.

\bibitem{MeierMSD4Qubit}
Adam~M. Meier, Bryan Eastin, and Emanuel Knill.
\newblock {Distillation protocols for Fourier states in quantum computing}.
\newblock 2013.

\bibitem{PeresPPTNonLocConjecture}
Asher Peres.
\newblock All the bell inequalities.
\newblock {\em Foundations of Physics}, 29:589--614, 1999.

\bibitem{piani2009RelEntFaithful}
M.~Piani.
\newblock Relative entropy of entanglement and restricted measurements.
\newblock {\em Phys. Rev. Lett.}, 103:160504, Oct 2009.

\bibitem{PlenioLogarithmicNegativity}
M.~B. Plenio.
\newblock Logarithmic negativity: A full entanglement monotone that is not
  convex.
\newblock {\em Phys. Rev. Lett.}, 95:090503, Aug 2005.

\bibitem{Reichardt_magic_state_dist_2005}
B.W. Reichardt.
\newblock {\em Quantum Inf. Proc.}, 4:251, 2005.

\bibitem{Reichardt_magic_state_dist_2009}
B.W. Reichardt.
\newblock {\em Quantum Inf. Comput.}, 9:1030, 2009.

\bibitem{Shor1996FT}
Peter Shor.
\newblock Fault-tolerant quantum computation.
\newblock In {\em Proc. 35th Ann.\ Symp.\ on Fundamentals of Computer Science},
  pages 56--65. IEEE Press, 1996.

\bibitem{SynakRadtke2006AsymptoticContinuity}
Barbara Synak-Radtke and Michal Horodecki.
\newblock On asymptotic continuity of functions of quantum states.
\newblock {\em Journal of Physics A: Mathematical and General}, 39(26):L423,
  2006.

\bibitem{VedralPlenioEntMeasandPurification}
V.~Vedral and M.~B. Plenio.
\newblock Entanglement measures and purification procedures.
\newblock {\em Phys. Rev. A}, 57:1619--1633, Mar 1998.

\bibitem{veitch2012BoundMagicNJP}
Victor Veitch, Christopher Ferrie, David Gross, and Joseph Emerson.
\newblock Negative quasi-probability as a resource for quantum computation.
\newblock {\em New Journal of Physics}, 14(11):113011, 2012.

\bibitem{VeitchEtAl_LinOpticsWigSimulation}
Victor Veitch, Nathan Wiebe, Christopher Ferrie, and Joseph Emerson.
\newblock Efficient simulation scheme for a class of quantum optics experiments
  with non-negative wigner representation.
\newblock {\em New Journal of Physics}, 15(1):013037, 2013.

\bibitem{VidalWernerComputableEntanglement}
G.~Vidal and R.~F. Werner.
\newblock Computable measure of entanglement.
\newblock {\em Phys. Rev. A}, 65:032314, Feb 2002.

\bibitem{Wallman2012QubitQuasiProb}
Joel~J. Wallman and Stephen~D. Bartlett.
\newblock Non-negative subtheories and quasiprobability representations of
  qubits.
\newblock {\em Phys. Rev. A}, 85:062121, Jun 2012.

\bibitem{Wigner1932On}
E.~Wigner.
\newblock On the quantum correction for thermodynamic equilibrium.
\newblock {\em Phys. Rev.}, 40:749--759, 1932.

\bibitem{Wootters1987_discrete_wig_function}
William~K Wootters.
\newblock A wigner-function formulation of finite-state quantum mechanics.
\newblock {\em Annals of Physics}, 176(1):1 -- 21, 1987.

\end{thebibliography}

\appendix
\section{Proofs on the relative entropy of magic}

We begin by showing that the relative entropy is a valid measure of
magic.

\subsubsection*{Relative entropy of magic is a monotone\label{sub:rel_ent_monotone}}
\begin{thm*}
The relative entropy of magic is a magic monotone.\end{thm*}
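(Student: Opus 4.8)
The plan is to follow the standard resource-theory argument: everything reduces to (a) monotonicity of the Umegaki relative entropy under completely positive trace-preserving (CPTP) maps, $S(\Lambda(\rho)\|\Lambda(\sigma))\le S(\rho\|\sigma)$, together with unitary invariance and tensor additivity of $S(\cdot\|\cdot)$, and (b) the closure of $\stabset{d}$ under the elementary stabilizer operations. Since a stabilizer protocol is by definition a composition of Clifford unitaries, tensoring with a stabilizer state, computational-basis measurement of the last qudit, partial trace of the last qudit, and classical conditioning on measurement outcomes, it suffices to prove that $\relent{\cdot}$ is non-increasing on average under each of these elementary moves; the general ``on average'' statement then follows by composing the bounds and using the tower property of expectations. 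Throughout, let $\sigma^{\star}$ denote a stabilizer state achieving the minimum defining $\relent{\rho}$.

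For a Clifford unitary $U$, conjugation $\sigma\mapsto U\sigma U^{\dagger}$ is a bijection of $\stabset{d}$, so re-parametrising the minimisation and using unitary invariance of $S(\cdot\|\cdot)$ gives $\relent{U\rho U^{\dagger}}=\relent{\rho}$. For tensoring with a stabilizer state $S$, the product $\sigma^{\star}\otimes S$ is again a stabilizer state, so additivity of the relative entropy yields $\relent{\rho\otimes S}\le S(\rho\otimes S\,\|\,\sigma^{\star}\otimes S)=S(\rho\|\sigma^{\star})+S(S\|S)=\relent{\rho}$. For the partial trace, the reduced state $\Tr_{n}\sigma^{\star}$ of a stabilizer state is again a stabilizer state, and $\Tr_{n}$ is CPTP, so $\relent{\Tr_{n}\rho}\le S(\Tr_{n}\rho\,\|\,\Tr_{n}\sigma^{\star})\le S(\rho\|\sigma^{\star})=\relent{\rho}$.

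The only step needing a little care is the computational-basis measurement, because it is the single probabilistic operation and here we only want the bound to hold on average. Set $p_{i}=\Tr\!\big[(\mathbb{I}\otimes\ketbra{i}{i})\rho\big]$, $\rho_{i}=(\mathbb{I}\otimes\ketbra{i}{i})\rho(\mathbb{I}\otimes\ketbra{i}{i})/p_{i}$, and define $q_{i},\sigma_{i}$ analogously from $\sigma^{\star}$; each $\sigma_{i}$ is a stabilizer state since post-measurement states of stabilizer states in a stabilizer basis are stabilizer states. Applying monotonicity of $S(\cdot\|\cdot)$ to the CPTP instrument channel $\tau\mapsto\sum_{i}(\mathbb{I}\otimes\ketbra{i}{i})\tau(\mathbb{I}\otimes\ketbra{i}{i})\otimes\ketbra{i}{i}$, which records the outcome in an ancilla, gives
\[
S(\rho\|\sigma^{\star})\;\ge\;\sum_{i}p_{i}\log\frac{p_{i}}{q_{i}}+\sum_{i}p_{i}\,S(\rho_{i}\|\sigma_{i})\;\ge\;\sum_{i}p_{i}\,S(\rho_{i}\|\sigma_{i})\;\ge\;\sum_{i}p_{i}\,\relent{\rho_{i}},
\]
where the middle inequality is non-negativity of the classical relative entropy of $\{p_{i}\}$ relative to $\{q_{i}\}$. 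This is exactly the ``decreases on average'' condition for measurement, and classical conditioning is then handled by applying all the above bounds conditionally on the observed outcomes and averaging, so $\relent{\cdot}$ is non-increasing on average under every stabilizer protocol, i.e.\ it is a magic monotone. I expect the main (very mild) obstacle to be the bookkeeping in the measurement step --- in particular checking that outcomes with $q_{i}=0$ contribute nothing, which holds because support containment of $\sigma^{\star}$ forces $p_{i}=0$ there whenever $S(\rho\|\sigma^{\star})<\infty$ (and this minimum is finite since $\mathbb{I}/d\in\stabset{d}$). Everything else is an off-the-shelf property of the relative entropy or a standard closure fact about the stabilizer formalism.
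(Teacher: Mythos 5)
Your proof is correct and follows essentially the same route as the paper's: reduce to the elementary stabilizer operations, use closure of the stabilizer set under each, and invoke unitary invariance, additivity, and data processing of the relative entropy. The only cosmetic difference is in the measurement step, where the paper cites the Vedral--Plenio inequality $\sum_{i}p_{i}S(\rho_{i}/p_{i}\|\sigma_{i}/q_{i})\le S(\rho\|\sigma)$ while you re-derive it from monotonicity under the outcome-recording instrument channel plus non-negativity of the classical relative entropy --- the same inequality, obtained self-containedly.
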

\begin{proof}
We need to verify that this function is non-increasing under stabilizer
operations.
\begin{enumerate}
\item Invariance under Clifford unitaries: For any unitary, $S\left(U\rho U^{\dagger}\|U\sigma U^{\dagger}\right)=S\left(\rho\|\sigma\right)$.
If $U$ is a Clifford and $\sigma$ is a stabilizer state then $U\sigma U^{\dagger}$
will also be a stabilizer state, ergo $\relent{U\rho U^{\dagger}}=\min_{\sigma}S\left(U\rho U^{\dagger}\|\sigma\right)=\min_{\sigma}S\left(U\rho U^{\dagger}\|U\sigma U^{\dagger}\right)=\min_{\sigma}S\left(\rho\|\sigma\right)=\relent{\rho}$.
\item Non-increasing on average under stabilizer measurement: Without loss of generality, we consider
computational basis measurement on the final qudit. Let $\{V_{i}\}=\{\mathbb{I}\otimes\ketbra ii\}$
be the measurement POVM and label outcome probabilities $p_{i}=\Tr\left(V_{i}\rho\right),\ q_{i}=\Tr\left(V_{i}\sigma\right)$
as well as post-measurement states $\rho_{i}=V_{i}\rho V_{i}^{\dagger}$
and $\sigma_{i}=V_{i}\sigma V_{i}^{\dagger}$. In reference \cite{VedralPlenioEntMeasandPurification}
it is shown that 
\[
\sum_{i}p_{i}S\left(\frac{\rho_{i}}{p_{i}} \Big\| \frac{\sigma_{i}}{q_{i}}\right)\le S\left(\rho\|\sigma\right).
\]
Since $\sigma_{i}/q_{i}$ is a stabilizer state whenever $\sigma$
is a stabilizer state this implies measurement does not increase the
relative entropy of magic on average.
\item Non-increasing under partial trace: From the strong subadditivity
property of the von Neumann entropy \cite{LiebStrongSubAddEntropy}
we have $S\left(\Tr_{B}\left(\rho\right)\|\Tr_{B}\left(\sigma\right)\right)\le S\left(\rho\|\sigma\right)$
from which the result follows immediately. 
\item Invariance under composition with stabilizer states: $S\left(\rho\otimes A\|\sigma\otimes A\right)=S\left(\rho\|\sigma\right)$
for any quantum state $A$, from which it follows $\relent{\rho\otimes A}\le\relent{\rho}$.
Equality follows because the relative entropy of magic is non-increasing
under the partial trace, i.e., $\relent{\rho}\le\relent{\rho\otimes A}$.
\end{enumerate}
\end{proof}
We now turn to the asymptotic variant of the relative entropy of magic,
$\regrelent{\rho}=\lim_{n\rightarrow\infty}\relent{\rho^{\otimes n}}/n$.
We show that this quantity is non-zero if and only if $\rho$ is a
magic state, which in particular implies that magic must be consumed
at a non-zero rate to create magic states. We will also need this
result for \thmref{asymp_conv_rate}.

\subsubsection*{Regularized relative entropy of magic is faithful.\label{sub:faithful-proof}}
\begin{thm*}
The regularized relative entropy of magic is faithful in the sense
that $\regrelent{\rho}=0$ if and only if $\rho$ may be written as
a convex combination of stabilizer states.\end{thm*}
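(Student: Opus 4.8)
I would dispatch the two directions separately; the converse is immediate and the forward direction reduces to a theorem of Piani \cite{piani2009RelEntFaithful}.

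\emph{The ``if'' direction.} Suppose $\rho=\sum_i p_i S_i$ with the $S_i$ pure stabilizer states. Then $\rho^{\otimes n}=\sum_{i_1,\dots,i_n}p_{i_1}\cdots p_{i_n}\,S_{i_1}\otimes\cdots\otimes S_{i_n}$, and each $S_{i_1}\otimes\cdots\otimes S_{i_n}$ is again a pure stabilizer state of $\mathcal H_{d^n}$ (the $n$-qudit Clifford group contains all tensor products of single-qudit Cliffords and $\ket 0^{\otimes n}$ is a computational-basis state). Hence $\rho^{\otimes n}\in\stabset{d^n}$, so $\relent{\rho^{\otimes n}}\le S\!\left(\rho^{\otimes n}\|\rho^{\otimes n}\right)=0$, and nonnegativity of the relative entropy forces $\relent{\rho^{\otimes n}}=0$ for every $n$; therefore $\regrelent{\rho}=\lim_n \relent{\rho^{\otimes n}}/n=0$.

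\emph{The ``only if'' direction.} The plan is to sandwich $\regrelent{\rho}$ from below by a \emph{measured} relative entropy of magic and invoke Piani's faithfulness theorem. For a stabilizer measurement channel $\mathcal E$ (a map $\tau\mapsto\sum_k\Tr(E_k\tau)\ketbra kk$ with $\{E_k\}$ a POVM implementable by a Clifford unitary followed by a computational-basis measurement) set $S^{M}(\rho\|\sigma)\equiv\sup_{\mathcal E}S\!\left(\mathcal E(\rho)\|\mathcal E(\sigma)\right)$ and $\mathrm r^{M}_{\mathcal M}(\tau)\equiv\min_{\sigma\in\stabset{d^{n}}}S^{M}(\tau\|\sigma)$ for $\tau\in\mathcal S(\mathcal H_{d^{n}})$. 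Data processing for the relative entropy gives $S^{M}(\rho\|\sigma)\le S(\rho\|\sigma)$ pointwise, hence $\mathrm r^{M}_{\mathcal M}(\rho^{\otimes n})\le\relent{\rho^{\otimes n}}$ for every finite $n$, and so $\liminf_n \tfrac1n\,\mathrm r^{M}_{\mathcal M}(\rho^{\otimes n})\le\regrelent{\rho}$. It thus suffices to show the left-hand regularization is strictly positive for $\rho\notin\stabset d$, which is precisely Piani's theorem: the regularized measured relative entropy distance to a set of free states is faithful whenever (a) the free set is convex, closed, closed under tensor products and partial traces, and contains a full-rank state, and (b) the measurements used in the distance are tomographically complete. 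I would then verify these hypotheses here: $\stabset{d^n}$ is the convex hull of finitely many pure stabilizer states, hence convex and compact; it is closed under partial traces (tracing out a qudit is a computational-basis measurement followed by discarding, which sends stabilizer states to stabilizer states) and under tensor products, and it contains the full-rank state $\mathbb I_{d^n}/d^n$; and the stabilizer measurements are tomographically complete because measuring in a uniformly random one of the $d+1$ stabilizer mutually unbiased bases is itself a stabilizer measurement whose outcome statistics determine $\rho$ uniquely (equivalently, the $\{T_{\boldsymbol u}\}$ form an operator basis, each measured in a stabilizer eigenbasis). Piani's theorem then gives $\liminf_n \tfrac1n\,\mathrm r^{M}_{\mathcal M}(\rho^{\otimes n})>0$, hence $\regrelent{\rho}>0$, which is the contrapositive of the claim.

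\emph{Where the work is.} The only substantive step is the second paragraph, and it is bookkeeping rather than real difficulty: one must confirm that ``stabilizer operations'' plays exactly the structural role in Piani's argument that ``LOCC'' plays for entanglement --- in particular that conditioning a stabilizer state on the outcome of a stabilizer measurement keeps it in $\stabset{d^n}$, and that the allowed measurements behave well under the blocking manipulations on $n$ copies used in the proof. The point most deserving of care is that the comparison with $\relent{\cdot}$ must be made termwise before regularizing, since we do not know $\mathrm r^{M}_{\mathcal M}$ to be subadditive, and one should double-check that Piani's argument nowhere requires the free operations to be reversible or to exceed ``measurements plus partial trace'' --- which is exactly the regime the cited statement isolates. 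I expect no other obstacle. A full writeup would also note that this is the magic-theory counterpart of the faithfulness of the regularized relative entropy of entanglement \cite{Brandao2010GenQuantumStein,piani2009RelEntFaithful}.
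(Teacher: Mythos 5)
Your proposal is correct and takes essentially the same route as the paper: the forward direction is the one-line observation that stabilizer states are closed under tensor powers, and the substantive direction is reduced to Piani's theorem on measured relative entropy distances, verifying exactly the hypotheses the paper checks (closure of the stabilizer set under the restricted measurements and partial trace, and tomographic completeness of stabilizer measurements via the $d+1$ mutually unbiased stabilizer bases). The only cosmetic difference is that the paper invokes Piani's bound in the single-letter form $\regrelent{\rho}\ge\min_{\sigma\in\stabset{d}}\mathbb{M}S\left(\rho\|\sigma\right)$ rather than passing through the regularized measured quantity, but the underlying superadditivity argument is the same.
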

\begin{proof}
We recover this result as a special case of the main theorem of reference
\cite{piani2009RelEntFaithful}. That paper introduces a variant of
the relative entropy measure that quantifies the distinguishability
of a quantum state from the set of free states using a restricted
set of measurements. Let $\{M_{i}\}$ be a measurement POVM and define
the map 
\[
\mathcal{M}\left(\rho\right)=\sum_{i}p_{i}\left(\rho\right)\ketbra ii,\ p_{i}\left(\rho\right)=\Tr\left(\rho M_{i}\right),
\]
where $\left\{ \ket i\right\} $ is any orthonormal set and $\mathcal{M}$
is a map associated to measurement $\{M_{i}\}$. Letting $\mathbb{M}$
be the set of restricted measurements we can define,
\[
\mathbb{M}S\left(\rho\|\sigma\right)\equiv\max_{\mathcal{M}\in\mathbb{M}}S\left(\mathcal{M}\left(\rho\right)\|\mathcal{M}\left(\sigma\right)\right).
\]
The significance of this quantity is from theorem 1 of \cite{piani2009RelEntFaithful}: 
\begin{thm*}
Consider a restricted set of operations inducing a resource theory.
Let $\mathbb{M}$ be the restricted set of measurements (here the
stabilizer measurements) and $P$ the set of free states (here the
stabilizer states). If the set of free states is closed under restricted
measurement and the partial trace then it holds that the regularization
of the relative entropy distance to the set of free states $r_{P}^{\infty}\left(\rho\right)$
satisfies
\[
r_{P}^{\infty}\left(\rho\right)\ge\min_{\sigma\in P}\mathbb{M}S\left(\rho\|\sigma\right).
\]

\end{thm*}

The stabilizer formalism satisfies the conditions of the
theorem. Moreover, since the stabilizer measurements contain an informationally
complete measurement it holds that $\mathbb{M}S\left(\rho\|\sigma\right)>0$
whenever $\rho$ is a magic state. This implies $\regrelent{\rho}>0$
whenever $\rho$ is a magic state. $\regrelent{\rho}=0$ for all stabilizer
states $\rho$, so the claimed result follows.

\end{proof}

\section{Proofs on sum negativity and mana}

\subsection{Odd dimensions}

The main ingredient in establishing both $\sn{\rho}$ and $\mana{\rho}$
as magic monotones is to show that $\wignorm{\rho}=\sum_{\boldsymbol{u}}\abs{W_{\rho}\left(\boldsymbol{u}\right)}$
is a magic monotone.

\subsubsection*{Wigner function 1-norm is a magic monotone.\label{sub:wig_norm_monotone_proof}}
\begin{thm*}
$\wignorm{\rho}=\sum_{\boldsymbol{u}}\abs{W_{\rho}\left(\boldsymbol{u}\right)}$
is a convex magic monotone. \end{thm*}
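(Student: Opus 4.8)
The plan is to check the defining clauses of a magic monotone for $\wignorm{\rho}$ one at a time, drawing on the six facts about the discrete Wigner function listed above together with the multiplicative composition law \eqref{wignorm_composition}. Two of the clauses come essentially for free. Convexity: since $\rho\mapsto W_{\rho}$ is linear, $W_{p\rho+(1-p)\sigma}=pW_{\rho}+(1-p)W_{\sigma}$ pointwise, so the triangle inequality for $\abs{\cdot}$, summed over phase space, gives $\wignorm{p\rho+(1-p)\sigma}\le p\wignorm{\rho}+(1-p)\wignorm{\sigma}$. Invariance under Clifford unitaries: by fact~2 a Clifford acts as a permutation of discrete phase space, hence only permutes the numbers $\{W_{\rho}(\boldsymbol{u})\}$ and leaves their absolute-value sum unchanged (so in particular $\wignorm{\cdot}$ is invariant, not merely non-increasing, as it must be).

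For composition with a stabilizer state $S$ I would combine \eqref{wignorm_composition} with discrete Hudson's theorem: $\wignorm{\rho\otimes S}=\wignorm{\rho}\,\wignorm{S}$, and since $W_{S}(\boldsymbol{v})\ge0$ for all $\boldsymbol{v}$ (fact~1), $\wignorm{S}=\sum_{\boldsymbol{v}}W_{S}(\boldsymbol{v})=\Tr S=1$; hence $\wignorm{\rho\otimes S}=\wignorm{\rho}$ exactly, which is the required invariance. For the partial trace of the last qudit I would first record the marginalization identity $W_{\Tr_{B}\rho}(\boldsymbol{u})=\sum_{\boldsymbol{v}}W_{\rho}(\boldsymbol{u}\oplus\boldsymbol{v})$ — this drops out of facts~4--6 (expand $\rho$ in the $A_{\boldsymbol{w}}$ basis, use $A_{\boldsymbol{u}\oplus\boldsymbol{v}}=A_{\boldsymbol{u}}\otimes A_{\boldsymbol{v}}$, $\Tr A_{\boldsymbol{v}}=1$, and the orthogonality $\Tr(A_{\boldsymbol{u}}A_{\boldsymbol{u}'})=d^{n-1}\delta_{\boldsymbol{u}\boldsymbol{u}'}$ of the phase point operators) — and then apply the triangle inequality once more: $\wignorm{\Tr_{B}\rho}=\sum_{\boldsymbol{u}}\abs{\sum_{\boldsymbol{v}}W_{\rho}(\boldsymbol{u}\oplus\boldsymbol{v})}\le\sum_{\boldsymbol{u},\boldsymbol{v}}\abs{W_{\rho}(\boldsymbol{u}\oplus\boldsymbol{v})}=\wignorm{\rho}$.

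The delicate step — where I expect the real work to lie — is showing $\wignorm{\cdot}$ is non-increasing \emph{on average} under a computational-basis measurement of the last qudit. Writing $\tilde{\rho}_{i}=(\mathbb{I}\otimes\ketbra ii)\rho(\mathbb{I}\otimes\ketbra ii)=\rho^{(i)}_{A}\otimes\ketbra ii$ with $\rho^{(i)}_{A}=\bra i\rho\ket i$ an unnormalized operator on the remaining qudits with $\Tr\rho^{(i)}_{A}=p_{i}$, homogeneity of the norm and \eqref{wignorm_composition} give $p_{i}\wignorm{\rho_{i}}=\wignorm{\tilde{\rho}_{i}}=\wignorm{\rho^{(i)}_{A}}\wignorm{\ketbra ii}=\wignorm{\rho^{(i)}_{A}}$, again using $\wignorm{\ketbra ii}=1$. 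So it suffices to prove $\sum_{i}\wignorm{\rho^{(i)}_{A}}\le\wignorm{\rho}$. Expanding $\rho$ via fact~6, projecting onto $\ket i$ on the last qudit, and reading off the $A$-basis coefficients yields $W_{\rho^{(i)}_{A}}(\boldsymbol{u})=\sum_{\boldsymbol{v}}q^{(i)}_{\boldsymbol{v}}\,W_{\rho}(\boldsymbol{u}\oplus\boldsymbol{v})$, where $q^{(i)}_{\boldsymbol{v}}=\bra i A_{\boldsymbol{v}}\ket i=\Tr(A_{\boldsymbol{v}}\ketbra ii)$. The two features that make the estimate close are (i) $q^{(i)}_{\boldsymbol{v}}\ge0$ for every $i,\boldsymbol{v}$ — discrete Hudson's theorem applied to the stabilizer state $\ket i$ — and (ii) $\sum_{i}q^{(i)}_{\boldsymbol{v}}=\Tr\bigl(A_{\boldsymbol{v}}\sum_{i}\ketbra ii\bigr)=\Tr A_{\boldsymbol{v}}=1$, using completeness of the computational basis and fact~5. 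With these, the triangle inequality gives $\wignorm{\rho^{(i)}_{A}}\le\sum_{\boldsymbol{u},\boldsymbol{v}}q^{(i)}_{\boldsymbol{v}}\abs{W_{\rho}(\boldsymbol{u}\oplus\boldsymbol{v})}$, and summing over $i$ collapses $\sum_{i}q^{(i)}_{\boldsymbol{v}}$ to $1$, leaving $\sum_{i}\wignorm{\rho^{(i)}_{A}}\le\sum_{\boldsymbol{u},\boldsymbol{v}}\abs{W_{\rho}(\boldsymbol{u}\oplus\boldsymbol{v})}=\wignorm{\rho}$, as wanted.

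Finally, operations conditioned on earlier measurement outcomes or on classical randomness produce convex mixtures of states obtained from operations already treated, so they are handled by combining convexity with those cases. Assembling all of this verifies every clause of the definition of a magic monotone, establishing that $\wignorm{\cdot}$ is a convex magic monotone; $\sn{\rho}$ and $\mana{\rho}$ then inherit monotonicity as described in the main text.
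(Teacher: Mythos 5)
Your proposal is correct and follows essentially the same route as the paper's proof: Clifford invariance via phase-space permutation, multiplicativity plus discrete Hudson for composition with stabilizer states, marginalization plus the triangle inequality for partial trace, and for measurement the same key facts $\braopket i{A_{\boldsymbol{v}}}i\ge0$ and $\sum_{i}\braopket i{A_{\boldsymbol{v}}}i=1$ combined with the triangle inequality (your factoring of the post-measurement state as $\rho_{A}^{(i)}\otimes\ketbra ii$ is just a cleaner bookkeeping of the paper's expansion). The only blemish is the parenthetical normalization $\Tr(A_{\boldsymbol{u}}A_{\boldsymbol{u}'})=d^{n-1}\delta_{\boldsymbol{u}\boldsymbol{u}'}$, which should be $d^{n}\delta_{\boldsymbol{u}\boldsymbol{u}'}$, but this does not affect the argument.
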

\begin{proof}
We need to verify that this function is non-increasing under stabilizer
operations:
\begin{enumerate}
\item Invariance under Clifford unitaries: The action of Clifford unitaries
on the phase space of the Wigner function is a permutation, $\boldsymbol{u}\rightarrow F\boldsymbol{u}$.
Thus, $\wignorm{U\rho U^{\dagger}}=\sum_{\boldsymbol{u}}\abs{W_{U\rho U^{\dagger}}\left(\boldsymbol{u}\right)}=\sum_{\boldsymbol{u}}\abs{W_{\rho}\left(F\boldsymbol{u}\right)}=\sum_{\boldsymbol{u}}\abs{W_{\rho}\left(\boldsymbol{u}\right)}=\wignorm{\rho}$.
\item Non-increasing on average under stabilizer measurement: We consider
computational basis measurement on the final qudit. The expected value
of $\wignorm{\tilde{\rho}}$ for the post measurement state $\tilde{\rho}$
is: 
\begin{eqnarray*}
\mathbb{E}\left[\wignorm{\tilde{\rho}}\right] & = & \sum_{i}\Tr\left(\rho\mathbb{I}\otimes\ketbra ii\right)\wignorm{\left(\mathbb{I}\otimes\ketbra ii\right) \rho \left(\mathbb{I}\otimes\ketbra ii\right)/\Tr\left(\rho\mathbb{I}\otimes\ketbra ii\right)}\\
 & = & \sum_{i}\wignorm{\left(\mathbb{I}\otimes\ketbra ii\right) \rho\left(\mathbb{I}\otimes\ketbra ii\right)},
\end{eqnarray*}
and by writing $\left(\mathbb{I}\otimes\ketbra ii\right)\rho\left(\mathbb{I}\otimes\ketbra ii\right)$
as:
\begin{eqnarray*}
\left(\mathbb{I}\otimes\ketbra ii\right)\rho \left(\mathbb{I}\otimes\ketbra ii\right) & = & \sum_{\boldsymbol{u},\boldsymbol{v}}W_{\rho}\left(\boldsymbol{u}\oplus\boldsymbol{v}\right)\braopket i{A_{\boldsymbol{v}}}i\cdot A_{\boldsymbol{u}}\otimes\ketbra ii\\
 & = & \sum_{\boldsymbol{u}}\left(\sum_{\boldsymbol{v}}W_{\rho}\left(\boldsymbol{u}\oplus\boldsymbol{v}\right)\braopket i{A_{\boldsymbol{v}}}i\right)A_{\boldsymbol{u}}\otimes\sum_{\boldsymbol{w}}\left(\frac{1}{d}\braopket i{A_{\boldsymbol{w}}}i\right)A_{\boldsymbol{w}}
\end{eqnarray*}
we find,
\begin{eqnarray*}
\mathbb{E}\left[\wignorm{\tilde{\rho}}\right] & = & \sum_{i}\sum_{\boldsymbol{u},\boldsymbol{w}}\abs{\left(\sum_{\boldsymbol{v}}W_{\rho}\left(\boldsymbol{u}\oplus\boldsymbol{v}\right)\braopket i{A_{\boldsymbol{v}}}i\right)\left(\frac{1}{d}\braopket i{A_{\boldsymbol{w}}}i\right)}\\
 & = & \sum_{i}\sum_{\boldsymbol{u}}\left(\sum_{\boldsymbol{w}}\frac{1}{d}\braopket i{A_{\boldsymbol{w}}}i\right)\abs{\left(\sum_{\boldsymbol{v}}W_{\rho}\left(\boldsymbol{u}\oplus\boldsymbol{v}\right)\braopket i{A_{\boldsymbol{v}}}i\right)}\ \ \text{(}\because\braopket i{A_{\boldsymbol{w}}}i\ge0\text{)}\\
 & \le & \sum_{i}\sum_{\boldsymbol{u}}\sum_{\boldsymbol{v}}\abs{W_{\rho}\left(\boldsymbol{u}\oplus\boldsymbol{v}\right)\braopket i{A_{\boldsymbol{v}}}i}\text{ \ \ (\ensuremath{\because} triangle inequality and $\sum_{\boldsymbol{w}}\frac{1}{d}\braopket i{A_{\boldsymbol{w}}}i=1$)}\\
 & = & \sum_{\boldsymbol{u},\boldsymbol{v}}\left(\sum_{i}\braopket i{A_{\boldsymbol{v}}}i\right)\abs{W_{\rho}\left(\boldsymbol{u}\oplus\boldsymbol{v}\right)}\text{ \ \ (}\because\braopket i{A_{\boldsymbol{w}}}i\ge0\text{)}\\
 & = & \wignorm{\rho}\text{ \ \ (}\because\sum_{i}\braopket i{A_{\boldsymbol{v}}}i=1).
\end{eqnarray*}

\item Invariance under composition with stabilizer states: Let $\sigma$
be any state with positive Wigner representation. Then, 
\begin{eqnarray*}
\wignorm{\rho\otimes\sigma} & = & \wignorm{\rho}\wignorm{\sigma}\\
 & = & \wignorm{\rho},
\end{eqnarray*}
since $\wignorm{\sigma}=\sum_{\boldsymbol{u}}\abs{W_{\sigma}\left(\boldsymbol{u}\right)}=\sum_{\boldsymbol{u}}W_{\sigma}\left(\boldsymbol{u}\right)=1$
for positively represented states. All stabilizer states are positively
represented so they are included as a special case.
\item Non-increasing under partial trace: We trace out the final
qudit $B$ of the system. If $\rho=\sum_{\boldsymbol{u},\boldsymbol{v}}W_{\rho}\left(\boldsymbol{u}\oplus\boldsymbol{v}\right)A_{\boldsymbol{u}}\otimes A_{\boldsymbol{v}}$
then $\Tr_{B}\left(\rho\right)=\sum_{\boldsymbol{u}}\left(\sum_{\boldsymbol{v}}W_{\rho}\left(\boldsymbol{u}\oplus\boldsymbol{v}\right)\right)A_{\boldsymbol{u}}$,
so
\begin{eqnarray*}
\wignorm{\Tr_{B}\left(\rho\right)} & = & \sum_{\boldsymbol{u}}\abs{\sum_{\boldsymbol{v}}W_{\rho}\left(\boldsymbol{u}\oplus\boldsymbol{v}\right)}\\
 & \le & \wignorm{\rho},
\end{eqnarray*}
by the triangle inequality.
\item Convexity: 
\begin{eqnarray*}
\wignorm{p\rho+\left(1-p\right)\sigma} & = & \sum_{\boldsymbol{u}}\abs{pW_{\rho}\left(\boldsymbol{u}\right)+\left(1-p\right)W_{\sigma}\left(\boldsymbol{u}\right)}\\
 & \le & p\wignorm{\rho}+\left(1-p\right)\wignorm{\sigma},
\end{eqnarray*}
 by the triangle inequality.
\end{enumerate}
\end{proof}
We next establish that this was essentially the only choice we could
have made to (simply) quantify the magic of a quantum state via its
Wigner representation.

\subsubsection*{Sum negativity is the unique phase space measure of magic.\label{sub:sum_neg_unique_proof}}
\begin{thm*}
Assume $\mono{\rho}$ is a function on quantum states that satisfies
the following conditions: 1. $\mono{\rho}$ is a magic monotone, 2.
$\mono{\rho}$ is determined only by the negative values of the Wigner
function and 3. $\mono{\rho}$ is invariant under arbitrary permutations
of discrete phase space (that is, even under permutations that do
not correspond to quantum transformations). Then $\mono{\rho}$ may
be written as a function of only $\sn{\rho}$. \end{thm*}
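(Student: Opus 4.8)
The plan is to distill the hypotheses into a purely combinatorial statement about the multiset of negative Wigner entries, and then to discharge that statement by tensoring with suitable \emph{diagonal} stabilizer ancillas. First I would record the two facts that do all the work. By conditions 2 and 3, $\mono{\rho}$ depends on $\rho$ only through the unordered multiset $\{W_\rho(\boldsymbol u):W_\rho(\boldsymbol u)<0\}$ of negative Wigner values. And because $\mono{\cdot}$ is a magic monotone and composition with a stabilizer state is reversible within the stabilizer formalism (undone by partial trace), $\mono{\rho\otimes A}=\mono{\rho}$ for every stabilizer state $A$. So it suffices to prove: whenever $\sn{\rho}=\sn{\rho'}$, there are stabilizer states $A,A'$ for which $\rho\otimes A$ and $\rho'\otimes A'$ carry the \emph{same} multiset of negative Wigner values. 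Granting this, $\mono{\rho}=\mono{\rho\otimes A}=\mono{\rho'\otimes A'}=\mono{\rho'}$, so $\mono{\cdot}$ factors through $\sn{\cdot}$.

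For the construction, set $s=\sn{\rho}=\sn{\rho'}$ and list the distinct negative Wigner entries of $\rho$ as $-m_1,\dots,-m_K$ with multiplicities $\kappa_1,\dots,\kappa_K$, so $\sum_k\kappa_k m_k=s$; likewise let $\rho'$ have distinct negative entries $-m'_1,\dots,-m'_L$ with multiplicities $\kappa'_1,\dots,\kappa'_L$ and $\sum_l\kappa'_l m'_l=s$. Using $W_{\rho\otimes A}(\boldsymbol u\oplus\boldsymbol v)=W_\rho(\boldsymbol u)W_A(\boldsymbol v)$ together with the non-negativity of stabilizer Wigner functions, the negative entries of $\rho\otimes A$ are precisely the products $-m_k\cdot a$ as $a$ ranges over the strictly positive values of $W_A$, with multiplicities multiplied. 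I would therefore choose, for $r$ large enough that $d^r\ge\max\!\big(\sum_k\kappa_k,\sum_l\kappa'_l\big)$, the $r$-qudit diagonal state $A=\sum_{l=1}^{L}\sum_{c=1}^{\kappa'_l}\tfrac{m'_l}{s}\,\ketbra{i_{l,c}}{i_{l,c}}$ with $\sum_l\kappa'_l$ distinct basis indices $i_{l,c}$. This is a genuine stabilizer state exactly because the weights $m'_l/s\ge 0$ sum to $\tfrac1s\sum_l\kappa'_l m'_l=1$, and its Wigner function equals $\tfrac{m'_l}{s\,d^r}$ on the $\kappa'_l d^r$ points of the (disjoint) affine subspaces supporting those basis states and $0$ elsewhere. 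Hence $\rho\otimes A$ has negative-value multiset $\big\{\,-\tfrac{m_k m'_l}{s\,d^r}\ \text{with multiplicity}\ \kappa_k\kappa'_l d^r\,\big\}_{k,l}$. Defining $A'$ the same way with $\rho$ and $\rho'$ interchanged (and the same $r$) makes $\rho'\otimes A'$ carry the identical multiset, which closes the argument.

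The reduction and the multiplicity bookkeeping are routine; the one conceptually delicate point is the choice of ancilla. Trying to build an ancilla whose Wigner function reproduces $\rho'$'s negative part position-by-position is hopeless, since stabilizer Wigner functions are severely constrained — this is precisely why $\text{maxneg}$ fails to be a monotone. The resolution is that conditions 2 and 3 only ever see the bare multiset of negative values, so one only needs an ancilla whose positive Wigner values, as an unordered multiset, are \emph{proportional} to the negative values of the other state, and diagonal (computational-basis) stabilizer states deliver exactly that with a completely explicit Wigner function. The only things to check carefully are that the weights $m'_l/s$ form an honest probability distribution — which is literally the content of $\sn{\rho'}=s$ — and that repeated negative entries are tracked with the right multiplicities, as above.
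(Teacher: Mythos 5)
Your proposal is correct and follows essentially the same route as the paper's own proof: reduce to matching the multisets of negative Wigner values, then tensor each state with a diagonal (computational-basis) stabilizer ancilla whose weights are the other state's negative entries normalized by the common sum negativity, so that both products carry the negative multiset $\{-m_k m'_l/(s\,d^r)\}$ with matching multiplicities. Your version is slightly more explicit about multiplicity bookkeeping and about why the ancilla's Wigner function has the claimed form, but the construction and the invariance-under-stabilizer-composition step are identical to the paper's.
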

\begin{proof}
Let $\rho$ have negative entries $-N_{1},-N_{2},\dots,-N_{k}$ and
$\rho'$ have negative entries $-N_{1}',-N_{2}',\dots,-N_{k'}'$,
with

\[
N\equiv\sn{\rho}=\sum N_{i}=\sum N_{i}'=\sn{\rho'}.
\]
$A$ and $A'$ will be ancilla states acting on $m$ qudits, with
$m=\max\left\{ \lceil\log_{d}k\rceil,\lceil\log_{d}k'\rceil\right\} $;
$d$ is the size of each qudit.

\begin{eqnarray*}
A & = & \sum_{i=1}^{k'}(N_{i}'/N)\ketbra ii\\
A' & = & \sum_{i=1}^{k}(N_{i}/N)\ketbra ii.
\end{eqnarray*}
These are valid states since the sum of the $N_{i}$ and $N_{i}'$
is the same. The Wigner function of $A$ consists of columns labeled
by $i$ with entries $N_{i}'/rN$, with $r=d^{m}$; each column contains
$r$ such elements. It also has $d^{m}-k'$ columns filled with zeros.
Similarly for $A'$, except it has $d^{m}-k$ zero columns and the
non-zero columns have $r$ copies of $N_{i}/rN$ instead.

The negative Wigner function entries for the state $\rho\otimes A$
are of the form $-N_{i}N_{j}'/rN$, for all $i$ and $j$. Each of
these appears $r$ times. The negative Wigner function entries for
$\rho'\otimes A'$ are of the form $-N_{j}'N_{i}/rN$, for all i and
j. Again, each appears $r$ times. These entries could be in different
locations, but since the function we are calculating does not depend
on location of negative entries, only their values, it follows that
\[
\mono{\rho}=\mono{\rho\otimes A}=\mono{\rho'\otimes A'}=\mono{\rho'}.
\]
Therefore, $\mono{\rho}$ is a function only of $\sn{\rho}$.
\end{proof}

\subsection{Continuity and Asymptotic Continuity\label{sub:mojo_continuity}}

In practice a perfect conversion is generally not possible, $\|\Lambda\left(\rho^{\otimes m}\right)-\sigma^{\otimes n}\|_{1}>0$
for even the best choice of stabilizer protocol $\Lambda$. A state
$\tilde{\sigma}_{n}$ that is close enough to $\sigma^{\otimes n}$
can be used in place of $\sigma^{\otimes n}$ in information theoretic
tasks so a better notion of conversion would be: how many copies of
$\rho$ are required to produce a state $\Lambda(\rho^{\otimes m})=\tilde{\sigma}_{n}$
that is ``close enough'' to $\sigma^{\otimes n}$. A natural notion
of closeness is $\|\tilde{\sigma}_{n}-\sigma^{\otimes n}\|_{1}<\epsilon$
for some operationally relevant $\epsilon$. It is conceivable that
there is some choice of $\tilde{\sigma}_{n}$ in the epsilon ball
around $\sigma^{\otimes n}$ such that $\mana{\tilde{\sigma}_{n}}\ll\mana{\sigma^{\otimes n}}$,
in which case $\mana{\sigma}$ would have little operational significance.
Happily, it is not difficult to show that $\mana{\rho}$ is continuous
with respect to the 1-norm in the sense that for a sequence of states
$\rho_{k},\sigma_{k}\in\mathcal{S}\left(\mathcal{H}_{d}\right)$ $\left\{ \|\rho_{k}-\sigma_{k}\|\right\} _{k}\rightarrow0\implies\left\{ \abs{\mana{\rho_{k}}-\mana{\sigma_{k}}}\right\} _{k}\rightarrow0$,
so for a target state \emph{of fixed dimension }there is some well-defined sense in which closeness in the 1-norm implies that the mana
of two states is close.

In the case of asymptotic conversion of states this notion needs some
massaging. Formally, let $\Lambda_{n}:\mathcal{S}\left(\mathcal{H}_{d^{m(n)}}\right)\rightarrow\mathcal{S}\left(\mathcal{H}_{d^{n}}\right)$
be stabilizer protocols satisfying 
\begin{eqnarray*}
\lim_{n\rightarrow\infty}\|\Lambda_{n}\left(\rho^{\otimes m(n)}\right)-\sigma^{\otimes n}\| & \rightarrow & 0
\end{eqnarray*}
In particular we would like to avoid a situation where $\lim_{n\rightarrow\infty}\mana{\Lambda_{n}\left(\rho^{\otimes m(n)}\right)}\ll\mana{\sigma^{\otimes n}}$.
One way that this requirement can be formalized is the property of
asymptotic continuity. A function is said to be asymptotically continuous
if for sequences $\rho_{n},\sigma_{n}$ on $\mathcal{H}_{n}$, $\lim_{n\rightarrow\infty}\|\rho_{n}-\sigma_{n}\|\rightarrow0$
implies: 
\[
\lim_{n\rightarrow\infty}\frac{f(\rho_{n})-f(\sigma_{n})}{1+\log(\dim\mathcal{H}_{n})}\rightarrow0.
\]
This notion is the commonly accepted generalization of continuity
to the asymptotic regime and is of particular importance because if
the mana could be shown to be asymptotically continuous it would give the asymptotic conversion rate, as in \thmref{asymp_conv_rate}.
Unhappily, it is very difficult to show this. This is mostly because
it is false:
\begin{thm*}
$\mana{\sigma}$ is not asymptotically continuous.\end{thm*}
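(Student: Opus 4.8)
The plan is to produce two sequences of states on a space of growing dimension that converge in trace norm while their manas differ by $\Theta(n)$ --- precisely a failure of asymptotic continuity. I would work with $n$ qutrits, so $\dim\mathcal{H}_n = 3^n$ and $\log\dim\mathcal{H}_n = n\log 3$. Fix $\epsilon_n = 1/n$ and set
\[
\rho_n = \frac{\mathbb{I}}{3^n},\qquad \sigma_n = (1-\epsilon_n)\frac{\mathbb{I}}{3^n} + \epsilon_n\,\strangestate^{\otimes n},
\]
where $\ket{\mathbb{S}}$ is the qutrit Strange state (\subref{The-Qutrit-Case}), whose Wigner function has a single negative entry equal to $-1/3$, so that $\sn{\strangestate}=1/3$ and $\wignorm{\strangestate}=5/3$. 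The maximally mixed state is a stabilizer state, hence $\mana{\rho_n}=0$; and $\|\rho_n-\sigma_n\|_1 = \epsilon_n\,\|\mathbb{I}/3^n - \strangestate^{\otimes n}\|_1 \le 2\epsilon_n \to 0$, so $(\rho_n,\sigma_n)$ is an admissible test pair for asymptotic continuity.

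The heart of the argument is a lower bound on $\mana{\sigma_n}=\log\wignorm{\sigma_n}$. By the multiplicative composition law $\wignorm{\rho\otimes\tau}=\wignorm{\rho}\,\wignorm{\tau}$ we get $\wignorm{\strangestate^{\otimes n}}=(2\sn{\strangestate}+1)^n=(5/3)^n$. Let $c_{\boldsymbol u}\in\{+1,-1\}$ be the sign of $W_{\strangestate^{\otimes n}}(\boldsymbol u)$, so that $\sum_{\boldsymbol u} c_{\boldsymbol u}\,W_{\strangestate^{\otimes n}}(\boldsymbol u) = \wignorm{\strangestate^{\otimes n}} = (5/3)^n$. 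Since $\abs{x}\ge c\,x$ for any sign $c$, and since $W_{\mathbb{I}/3^n}$ is the constant $3^{-2n}$ over the $3^{2n}$ phase space points,
\[
\wignorm{\sigma_n} \ge \sum_{\boldsymbol u} c_{\boldsymbol u}\,W_{\sigma_n}(\boldsymbol u) = (1-\epsilon_n)\,3^{-2n}\!\sum_{\boldsymbol u} c_{\boldsymbol u} + \epsilon_n\Big(\tfrac{5}{3}\Big)^{\!n} \ge \epsilon_n\Big(\tfrac{5}{3}\Big)^{\!n} - 1 ,
\]
using $\big|3^{-2n}\sum_{\boldsymbol u} c_{\boldsymbol u}\big|\le 1$. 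Hence, for $n$ large, $\mana{\sigma_n} \ge \log\!\big(\tfrac{1}{n}(5/3)^n - 1\big) = n\log(5/3) - \log n + o(1)$, so that
\[
\frac{\abs{\mana{\sigma_n}-\mana{\rho_n}}}{1+\log\dim\mathcal{H}_n} \ge \frac{n\log(5/3)-\log n+o(1)}{1+n\log 3} \longrightarrow \frac{\log(5/3)}{\log 3} > 0 ,
\]
which contradicts asymptotic continuity.

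The only step requiring genuine care is the lower bound $\wignorm{\sigma_n}\gtrsim\epsilon_n(5/3)^n$. A priori one might fear that adding the positive uniform background $(1-\epsilon_n)\mathbb{I}/3^n$ could drown out the negative Wigner weight of $\strangestate^{\otimes n}$, whose individual negative entries are only of order $6^{-n}$. The sign-pattern inequality handles this at a stroke: the background contributes at most $1$ when paired with any $\pm1$ pattern, whereas $\epsilon_n\strangestate^{\otimes n}$ still carries negativity of size $\epsilon_n(5/3)^n$, and $(5/3)^n$ beats the $1/n$ dilution. This separation of exponential rates is exactly what allows trace-norm closeness to coexist with a linear-in-$n$ mana gap --- the same phenomenon responsible for the known failure of asymptotic continuity of the logarithmic negativity (cf.\ \cite{PlenioLogarithmicNegativity}). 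I would also remark that any $\epsilon_n\to 0$ with $\log(1/\epsilon_n)=o(n)$ works equally well, and that replacing $\ket{\mathbb{S}}$ by any fixed magic qutrit state gives the same conclusion with $\log(5/3)$ replaced by its mana.
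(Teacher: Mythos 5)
Your proof is correct. It exploits the same underlying phenomenon as the paper's argument --- a vanishing-weight admixture of a state whose Wigner norm is exponentially larger drives the mana up by $\Theta(n)$ while leaving the trace distance of order $\epsilon_n$ --- but the execution differs enough to be worth comparing. The paper takes $\tilde{\sigma}_n=\left(1-\delta_n\right)\sigma^{\otimes n}+\delta_n\eta^{\otimes n}$ with $\eta$ chosen to be negative on exactly the same phase-space points as $\sigma$, so that absolute values add and $\wignorm{\tilde{\sigma}_n}$ is computed \emph{exactly}; this needs the sign-pattern-matching construction and the caveat that $\wignorm{\sigma}$ not already be maximal among states negative on that set. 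You instead take the base state to be the maximally mixed state and merely \emph{lower-bound} $\wignorm{\sigma_n}$ by pairing it with the sign pattern of $\strangestate^{\otimes n}$, so the uniform positive background costs you only an additive $1$ against the $\epsilon_n\left(5/3\right)^n$ negativity. This yields a fully explicit, self-contained counterexample --- with the extra punch that your $\sigma_n$ converges in trace norm to stabilizer states of zero mana --- at the price of only a one-sided estimate, which is all the theorem requires. Both arguments are sound; yours is arguably the cleaner instantiation of the same idea.
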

\begin{proof}
Define $\tilde{\sigma}_{n}=\left(1-\delta_{n}\right)\sigma^{\otimes n}+\delta_{n}\eta_{n}$, with $\lim_{n\rightarrow\infty}\delta_{n}\rightarrow0$.
Asymptotic continuity would imply 
\[
\lim_{n\rightarrow\infty}\frac{\mana{\tilde{\sigma}_{n}}-\mana{\sigma^{\otimes n}}}{n}\rightarrow0, 
\]
but
we will show this need not be the case. Suppose $\sigma$ is negative
on points $\mathcal{N}=\left\{ \boldsymbol{u}:\ W_{\sigma}(\boldsymbol{u})<0\right\} $.
Let $\eta$ be the state with maximal sum negativity satisfying $W_{\eta}(\boldsymbol{u})<0\iff\boldsymbol{u}\in\mathcal{N}$
(ie. $\eta$ is negative on the same points as $\sigma$). Then,
\begin{eqnarray*}
\wignorm{\tilde{\sigma}} & = & \sum_{\boldsymbol{u}}|\left(1-\delta_{n}\right)W_{\sigma^{\otimes n}}(\boldsymbol{u})+\delta_{n}W_{\eta^{\otimes n}}(\boldsymbol{u})|\\
 & = & \sum_{\boldsymbol{u}}\left(\left(1-\delta_{n}\right)|W_{\sigma^{\otimes n}}(\boldsymbol{u})|+\delta_{n}|W_{\eta^{\otimes n}}(\boldsymbol{u})|\right)\\
 & = & \left(1-\delta_{n}\right)\wignorm{\sigma^{\otimes n}}+\delta_{n}\wignorm{\eta^{\otimes n}}\\
 & = & \left(1-\delta_{n}\right)\wignorm{\sigma}^{n}+\delta_{n}\wignorm{\eta}^{n}.
\end{eqnarray*}
Here we have exploited that the sign of $W_{\eta^{\otimes n}}(\boldsymbol{u})$
and the sign of $W_{\sigma^{\otimes n}}(\boldsymbol{u})$ are always
the same. Subbing this in,
\[
\frac{\mana{\tilde{\sigma}_{n}}-\mana{\sigma^{\otimes n}}}{n}=\frac{1}{n}\log\left(\left(1-\delta_{n}\right)+\delta_{n}\left(\frac{\wignorm{\eta}}{\wignorm{\sigma}}\right)^{n}\right),
\]
but by assumption $\wignorm{\eta}>\wignorm{\sigma}$ unless $\wignorm{\sigma}$
is maximal for all states that are negative on $\mathcal{N}$, so
the limit need not go to $0$. Thus asymptotic continuity can not
hold generally.
\end{proof}
This result is not actually terribly surprising. Suppose we have a
preparation apparatus that always prepares $\sigma^{\otimes n}$.
Now further suppose that we rebuild our apparatus so that with probability
$\delta_{n}$ it will instead produce $\eta^{\otimes n}$ \emph{with
a far greater amount of negativity}. Then it is intuitively obvious
that we should be able to extract more negativity from the new apparatus
just by sacrificing a few copies of the output state to determine
whether we have produced $\sigma$ or $\eta$. Of course as $n$ goes
to infinity this will only work if $\delta_{n}$ goes to zero slowly
enough, but this argument does clarify the irrelevance of asymptotic
continuity. 

Essentially asymptotic continuity fails because it is possible that
access to a very large amount of resource, even with small probability,
can dramatically improve our preparation procedure. Notice that the
opposite is not (obviously) true: if our machine fails with a very
small probability this does not make it useless. Indeed, if we had
a promise of the form $\tilde{\sigma}_{n}=\left(1-\delta\right)\sigma^{\otimes n}+\delta\eta^{\otimes n}$
then we could just sacrifice some small number of registers to check
that that the output state was in fact $\sigma^{\otimes n}$. 
\end{document}